\def\eps{\epsilon}
\def\th{\theta}
\def\Om{\Omega}
\def\to{\rightarrow}
\def\calm{{\cal M}}
\newcommand{\reals}{\mbox{$\mathbb{R}$}}
\newcommand{\alg}{{\cal A}}
\newcommand{\mech}{{\cal M}}
\newcommand{\balloc}{X}
\newcommand{\ballocs}{{\mathbf \balloc}}
\newcommand{\balloci}[1][i]{{\balloc_{#1}}}
\newcommand{\val}{v}
\newcommand{\vals}{{\mathbf \val}}
\newcommand{\valsmi}{{\mathbf \val}_{-i}}
\newcommand{\vali}[1][i]{{\val_{#1}}}
\newcommand{\util}{u}
\newcommand{\utili}[1][i]{{\util_{#1}}}
\newcommand{\type}{t}
\newcommand{\types}{{\mathbf \type}}
\newcommand{\typei}[1][i]{{\type_{#1}}}
\newcommand{\crit}{\theta}
\newcommand{\criti}[1][i]{{\crit_{#1}}}
\newcommand{\decl}{d}
\newcommand{\decls}{{\mathbf \decl}}
\newcommand{\declsmi}{\decls_{-i}}
\newcommand{\decli}[1][i]{{\decl_{#1}}}
\newtheorem{thm}{Theorem}[section]
\newtheorem{lem}[thm]{Lemma}
\newtheorem{prop}[thm]{Proposition}
\newtheorem{claim}[thm]{Claim}
\newtheorem{conj}[thm]{Conjecture}
\theoremstyle{definition}
\theoremstyle{remark}
\newtheorem{example}[thm]{Example}
\DeclareMathOperator*{\argmax}{arg\,max}
\begin{document}
\title{Beyond Equilibria: Mechanisms for Repeated Combinatorial Auctions}
\author{%
  Brendan Lucier$^{1}$ 
}
\address{%
  $^{1}$Dept of Computer Science, University of Toronto}
\email{%
  blucier@cs.toronto.edu
  }

\begin{abstract} 
We study the design of mechanisms in combinatorial auction domains.  We focus on settings where the auction is repeated, motivated by auctions for licenses or advertising space.  We consider models of agent behaviour in which they either apply common learning techniques to minimize the regret of their bidding strategies, or apply short-sighted best-response strategies.  We ask: when can a black-box approximation algorithm for the base auction problem be converted into a mechanism that approximately preserves the original algorithm's approximation factor on average over many iterations?  We present a general reduction for a broad class of algorithms when agents minimize external regret.  We also present a new mechanism for the combinatorial auction problem that attains an $O(\sqrt{m})$ approximation on average when agents apply best-response dynamics.
\end{abstract}

\keywords{Combinatorial Auctions; Mechanisms; Regret-minimization; Best-response}

\maketitle





\section{Introduction}

We consider problems in the combinatorial auction (CA) domain, where $m$ objects are to be allocated among $n$ potential buyers in order to maximize total value, subject to problem-specific feasibility constraints.  
These packing problems are complicated by game-theoretic issues: the buyers 
might benefit from misrepresenting their values to an allocation algorithm.  This prompts us to design mechanisms that use payments to encourage reasonable behaviour.
%
The well-known VCG mechanism solves incentive issues by inducing truth-telling as a dominant strategy, but is infeasible for 
computationally intractible problems.
Indeed, for many interesting problems (such as combinatorial auctions), there are large gaps between the best-known approximation factors attainable by efficient truthful mechanisms and those possible in purely computational settings.  
For some problems, these large gaps are essential \cite{PSS-08}.

In this paper we consider the problem of designing mechanisms that implement approximations to combinatorial auction problems without the use of dominant-strategy truthfulness.  We are motivated by the domain of repeated auctions, where an auction problem is resolved multiple times with the same objects and bidders. These include, for example, auctions for advertising spaces or slots \cite{eos-05}, bandwidth auctions (such as the FCC spectrum auction), airline landing rights auctions \cite{CSS-05}, etc.  In these settings a mechanism for the (one-shot) auction problem corresponds to a repeated game to be played by the agents.\footnote{
A simple extension allows preferences and participants to change over time, but sufficiently slowly compared to the rate of auction repetition.  Our results should extend easily to such settings.}

The question of how to model agent behaviour in repeated games has been studied extensively in the economic and algorithmic game theory literature (see chapters 17-21 of \cite{AGT} and references therein).  Many proposed models suppose that agents choose strategies (or distributions thereover) at equilibrium, where no agent has incentive to unilaterally deviate.  There are, however, a number of reasons to believe such models are unrealistic: in general equilibria are computationally hard to find, and may not exist without the presence of agents who randomize over strategies for no reason other than to preserve the stability of the system.  Even when pure equilibria exist, agents may not necessarily converge to an equilibrium (of the single-round game) or agree on which equilibrium (of the extended-form game) to choose.  Such concerns have also been noted elsewhere \cite{BHLR-08, GMV-05}.

We will instead focus our attention on two models of agent behaviour that do not make equilibrium assumptions, and have gained recent interest in the algorithmic game theory literature.  In the first model, agents can play arbitrary sequences of strategies for the repeated auction, under the assumption that they obtain low regret relative to the best fixed strategy in hindsight.  More precisely, the average external regret of such a bidder must tend to $0$ as the number of auction rounds increases.  These regret-minimizing bidders can be seen as agents that learn how to bid intelligently (relative to any fixed strategy benchmark) from the bidding history of past auction iterations.  Note that we require no assumptions about the synchrony or asynchrony of updates; arbitrary sets of agents can update their strategies concurrently.  The regret-minimization assumption is realistic because simple, efficient algorithms exist that minimize external regret for linear optimization problems such as repeated auctions \cite{KV-05,KKL-07}.  Under this model, our goal is to design an auction mechanism that achieves an approximation to the optimal social welfare \emph{on average} over sufficiently many rounds of the repeated auction.  This is precisely the problem of designing a mechanism with bounded \emph{price of total anarchy}, as introduced by Blum et al \cite{BHLR-08}.

In the second model, we assume that agents choose myopic best-response strategies to the current strategies of the other agents.  Such bidding behaviour is best motivated in settings where agents update their declarations asynchronously (ie. not concurrently).  We therefore model this behaviour as follows: on each auction round, an agent is chosen uniformly at random, and that agent is given the opportunity to change his strategy to the current myopic best-response.  Under this model, our goal again is to design auction mechanisms that achieve approximations to the best possible social welfare \emph{on average} over sufficiently many auction rounds, with high probability over the random choices of bidders.  This is closely related to the concept of the \emph{price of (myopic) sinking}, as introduced by Goemans et al \cite{GMV-05}.

Our high-level goal is to decouple computational issues from incentives issues.  A full (and admittedly ambitious) solution in our domain would be a black-box conversion of a given approximation algorithm into a mechanism that implements\footnote{Throughout the paper we use ``implement'' in the economic sense of obtaining the desired properties when used by rational agents.} the same approximation ratio, on average over sufficiently many auction rounds, given our model of bidder behaviour.  Our primary research question, partially addressed herein, is to what extent such implementations are possible.

\subsection{Our Contribution}

We design mechanisms that are based on a particular class of approximation algorithms for combinatorial auction problems: those that are monotone and satisfy the loser-independence property.  An algorithm is \emph{monotone} if, whenever a bidder can win some set $S$ by declaring a value of $v$ for it, then he could also win any subset of $S$ with any declared value at least $v$.  This monotonicity condition characterizes truthfulness when bidders are single-minded (meaning that each agent has value for only a single set), but not for general auction problems \cite{LOS-99}.
Roughly speaking, an algorithm is \emph{loser-independent} if 
the outcome for an agent depends only on those agents who would win if he did not participate, and on their declared values for their winnings.
This extends a notion of loser-independence for single-parameter problems, introduced by Chekuri and Gamzu \cite{CG-09}, to general auction problems.  Many interesting algorithms satisfy these properties, including greedy algorithms for CAs \cite{LOS-99} and convex bundle auctions \cite{BB-04}, primal-dual algorithms for unsplittable flow \cite{BKV-05}, and others.

Our first main result is that any monotone loser-independent $c$-approximate algorithm can be implemented as a mechanism with price of total anarchy at most $c(1+o(1))$.  
Our mechanism is a black-box reduction from an algorithm for a one-shot auction iteration, and the same mechanism is applied each auction round.  The form of our mechanism is very simple: on each round, it applies a simple modification to the bidders' declarations, then runs the approximation algorithm on the modified declarations and charges critical prices (i.e. an agent who wins a set pays the smallest amount he could have declared for that set and won it, given the declarations of the other bidders).

Our implementation does not depend on the specific algorithms used by the agents to minimize their regret; only that their regret vanishes as the number of rounds increases.  The rate of convergence to our approximation bound will depend on the rate at which the agents' regret vanishes.  

We demonstrate that our mechanism is resilient to the presence of byzantine agents, in the following sense.  If each agent either applies regret-minimizing strategies or makes arbitrary declarations (but never declares more than his true value for a set), then the mechanism attains a $c(1+o(1))$ approximation to the optimal welfare \emph{obtainable by the regret-minimizing bidders}.  The no-overbidding assumption is necessary (as otherwise a byzantine agent could bid arbitrarily highly and prevent any welfare from being obtained) and realistic, since we can view byzantine players as not understanding how to participate intelligently in the auction and thus likely to bid conservatively.

We conjecture that the mechanism described above also implements an $O(c)$ approximation, on average over sufficiently many rounds, in the model of best-response bidders.  
Whether this is so is an important open question.  

We then focus specifically on the general combinatorial auction problem in the best-response model.  
Specifically, we present a mechanism that implements an $O(s)$ approximation for combinatorial auctions with set allocations of size at most $s$, then extend this to a mechanism that implements an $O(\sqrt{m})$ approximation for general combinatorial auctions. Note that this approximation factor is the best possible, in that we attain it with high probability after polynomially many auction rounds (in fact, only a slightly superlinear number of rounds). 
%
%
We point out that while \emph{truthful} mechanisms with similar approximation ratios are known for \emph{single-minded} combinatorial auctions, our results are significant improvements over what is known to be achieveable with deterministic truthful algorithms.


Our results require a mild game-theoretic assumption, which is that bidders will not apply strategies that are (strictly) dominated by easily-found alternatives.  This is precisely the assumption of algorithmically undominated strategies, as introduced by Babaioff et al \cite{BLP-09}.  Additionally, the mechanism for best-response bidders also applies a technique known in implementation theory as \emph{virtual implementation}, where an alternative social choice rule is applied with vanishingly small probability \cite{JAC-01}.  We view this not as an introduction of randomness into the algorithm being implemented, but rather as the introduction of a trembling-hand consideration into the solution concept that encourages reasonable behaviour when best-response agents must distinguish between otherwise equally beneficial strategies.

\subsection{Regret Minimization}

We now describe the concept of external regret minimization in further detail.  The external regret of a sequence of declarations is the difference between the average utility of an agent (i.e. value of goods received minus payment extracted) and the maximum average utility that could have been obtained by a single fixed declaration made each round.  An algorithm for generating declarations is regret-minimizing if its regret vanishes as a function of the number of auction rounds.  

A simple and efficient algorithm due to Kalai and Vempala \cite{KV-05} minimizes regret for linear optimization problems, even when the strategy space has exponential size.  The algorithm requires access to an exact best-response oracle.  Kakade et al \cite{KKL-07} show how to use a $\gamma$-approximate best response oracle to achieve a $\gamma$-approximation to the best fixed declaration in hindsight.  

The regret-minimization mechanism we construct in this paper will reduce the strategic choices of bidders to a simple linear optimization problem, so that the algorithms described above can be used by agents to minimize external regret. This requires access to best-response oracles for the approximation algorithms being implemented.  The ability to compute best-response choices is also a necessary component for the setting of best-response bidders, where the best-response assumption certainly requires that such strategies can be found efficiently.  It is easy to compute best-responses when agents choose between only polynomially many strategies (such as, for example, when each agent's true valuation is a combination of polynomially many desired bundles).  In general, however, the problem of computing best-response is non-trivial when agents have exponentially many strategic choices; we leave the construction of such oracles to the creators of particular auction mechanisms.  

\subsection{Related Work}

Truthful mechanisms for the combinatorial auction problem have been extensively studied.  For general CAs,
Hastad's well-known inapproximability result \cite{HAS-97} 
shows that it is hard to approximate the problem to within $\Om(\sqrt{m})$ assuming $NP \neq ZPP$.  
The best known deterministic truthful mechanism for CAs with general valuations 
attains an approximation ratio of $O(\frac{m}{\sqrt{\log m}})$ \cite{HKNS-04}.  A randomized $O(\sqrt{m})$-approximate mechanism that is truthful in expectation was given by Lavi and Swamy \cite{LS-05}.  Dobzinski, Nisan and Schapira \cite{DNS-06} then gave an $O(\sqrt{m})$-approximate universally truthful randomized mechanism.

Many variations on the combinatorial auction problem have been considered in the literature.  Bartal et. al. \cite{BGN-03} give a truthful $O(m^{\frac{1}{B-2}})$ mechanism for multi-unit combinatorial auctions with $B$ copies of each object, for all $B \geq 3$.  Dobzinski and Nisan \cite{DN-07} construct a truthful $2$-approximate mechanism for multi-unit auctions. 
Many other problems have truthful mechansisms  
(\cite{LOS-99, MN-08, BKV-05}) 
when bidders are restricted to being single-minded.  In \cite{BL-09} the authors study the limited power of certain classes of greedy algorithms for truthfully approximating CA problems.

The problem of designing combinatorial auction mechanisms that implement approximations at equilibria (and, in particular, Bayes-Nash equilibria for partial information settings) was considered in \cite{CKS-08} for submodular CAs, and in \cite{BL-10} for general CA problems.  Implementation at equilibrium, especially for the alternative goal of profit maximization, has a rich history in the economics literature; see, for example, Jackson \cite{JAC-01} for a survey.

The study of regret-minimization goes back to the work of Hannan on repeated two-player games \cite{H-57}.  Kalai and Vempala \cite{KV-05} extend the work of Hannan to online optimization problems, and Kakade et al \cite{KKL-07} further extend to settings of approximate regret minimization.  Blum et al \cite{BHLR-08} apply regret-minimization to the study of inefficiency in repeated games, coining the phrase ``price of total anarchy'' for the worst-case ratio between the optimal objective value and the average objective value when agents minimize regret.

Properties of best-response dynamics in repeated games, and especially the question of convergence to a pure equilibrium, is well-studied (see Chapter 19 of \cite{AGT}).  The study of average performance of best-response dynamics as a metric of game inefficiency, the so-called ``price of sinking,'' was introduced by Goemanns et al \cite{GMV-05}.

Babaioff et al \cite{BLP-09} study implementation of algorithms in undominated strategies, which is a relaxation of the dominant strategy truthfulness concept.  They focus on a variant of the CA problem in which agents are assumed to have ``single-value'' valuations, and present a mechanism to implement such auctions in a multi-round fashion.  By comparison, mechanisms in our proposed model solve each instance of an auction in a one-shot manner, and our solution concept assumes that the auction is repeated multiple times.

\section{Model and Definitions}


In general we will use boldface to represent vectors, subscript $i$ to denote the $i$th component, and subscript $-i$ to denote all components except $i$, so that, for example, $\vals = (\vali,\valsmi)$.  

We consider the domain of combinatorial auction problems, where $n$ agents desires subsets of a set $M$ of $m$ objects.  An \emph{allocation profile} is a collection of subsets $X_1, \dotsc, X_n$, where $X_i$ is thought of as the subset allocated to agent $i$.  A particular problem instance is defined by the set of \emph{feasible allocation profiles} that are permitted; for example, the general combinatorial auction problem requires that all allocated subsets be disjoint.  Each agent $i$ has a privately-held \emph{valuation function} $\typei : 2^M \to \reals$, his \emph{type}, that assigns a value to each allocation.  We assume that valuation functions are monotone and normalized so that $\val(\emptyset) = 0$.  
A valuation function $\val$ is \emph{single-minded} if there exists $S \subseteq M$ and $x \geq 0$ such that $\val(T) = x$ if $S \subseteq T$ and $\val(T) = 0$ otherwise.  We will write $\emptyset$ for the zero valuation, and $(S,x)$ for a single-minded declaration for $S$ at value $x$.

An \emph{allocation rule} $\alg$ assigns to each valuation profile $\vals$ a feasible outcome $\alg(\vals)$; we write $\alg_i(\vals)$ 
for the allocation to agent $i$.  
We write $\alg$ for both an allocation rule and an algorithm that implements it.

An allocation rule is \emph{loser-independent} if, whenever $\valsmi$, $\valsmi'$ satisfy $\alg(\emptyset,\valsmi) = \alg(\emptyset,\valsmi')$ and  $\val_j(\alg_j(\emptyset,\valsmi)) = \val_j(\alg_j(\emptyset,\valsmi'))$ for all $j \neq i$, then $\alg(\vali,\valsmi) = \alg(\vali,\valsmi')$.  In other words, agent $i$'s perception of the behaviour of $\alg$ depends only on those agents who would win if agent $i$ did not participate, and on their declared values for their winnings.
%

A payment rule $P$ assigns a vector of $n$ payments to each valuation profile.  A \emph{direct revelation mechanism} $\mech$ is composed of an allocation rule $\alg$ and a payment rule $P$.  The mechanism proceeds by eliciting a valuation profile $\decls$ of \emph{declarations} from the agents, then applying the allocation and payment rules to $\decls$.  
The utility of agent $i$ for mechanism $\calm$, given declaration profile $\decls$, is $\utili(\decls) = \typei(\alg_i(\decls)) - P_i(\decls)$.  We think of each agent as wanting to choose $\decli$ to maximize $\utili(\decls)$.

The \emph{social welfare} obtained by allocation profile $\ballocs$, given type profile $\types$, is $SW(\ballocs,\types) = \sum_i \typei(\balloci)$. Given fixed type profile $\types$, we write $SW_{opt}$ for $\max_{\ballocs}\{SW(\ballocs,\types)\}$, and $SW_\alg(\decls) = \sum_i \typei(\alg(\decls))$.  When $D = (\decls^1, \decls^2, \dotsc, \decls^T)$ is a sequence of valuation profiles, we write $SW_\alg(D) = \frac{1}{T}\sum_t SW_\alg(\decls^t)$ for the average welfare obtained over all declarations in $D$.  We will sometimes replace subscript $\alg$ by $\mech$, in which case the social welfare is for the allocation rule of $\mech$.  Note that algorithm $\alg$ is a $c$-approximation if $SW_\alg(\types) \geq \frac{1}{c}SW_{opt}$ for all $\types$.

Given allocation rule $\alg$, agent $i$, declaration profile $\declsmi$, and set $S$, the \emph{critical price} $\th_i^\alg(S,\declsmi)$ for $S$ is the minimum value that agent $i$ could bid on set $S$ and be allocated $S$ by $\alg$ given fixed $\declsmi$.  That is, $\th_i(S,\declsmi) = \inf\{ v : \exists \decli, \decli(S) = v, \alg_i(\decli, \declsmi) = S\}.$



We say that a declaration $\decli$ is \emph{weakly dominated} by declaration $\decli'$ for agent $i$ if $\utili(\decli,\declsmi) \leq \utili(\decli',\declsmi)$ for all $\declsmi$, and $\utili(\decli,\declsmi) < \utili(\decli',\declsmi)$ for some $\declsmi$.  


Declaration sequence $D = (\decls^0, \decls^1, \dotsc, \decls^T)$ \emph{minimizes external regret} for agent $i$ if, for any fixed declaration $\decli$, $\sum_t \utili(\decli^t,\declsmi^t) \geq \sum_t \utili(\decli,\declsmi^t) + o(T)$.  That is, the utility of agent $i$ approaches the utility of the optimal fixed strategy in hindsight.



%
%
Declaration sequence $D = (\decls^0, \decls^1, \dotsc, \decls^T)$ is an instance of \emph{response dynamics} if, for all $1 \leq t \leq T$, profiles $\decls^{t-1}$ and $\decls^t$ differ on the declaration of at most one player.  Response dynamics $D$ is an instance of \emph{best-response dynamics} if, whenever $\decls^{t-1}$ and $\decls^t$ differ on the declaration of agent $i$, $\decl_i^t$ maximizes agent $i$'s utility given the declarations of the other bidders. That is, $\decl_i^t \in \argmax_{\decl}\{\utili(\decl,\declsmi^t)\}$.

\section{Regret-Minimizing Bidders}

In this section we prove that if agents avoid algorithmically dominated strategies and minimize external regret, then a loser-independent monotone algorithm $\alg$ can be converted into a mechanism with almost no loss to its average approximation ratio over many rounds.  The mechanism, $\mech_\alg$, is described in Figure \ref{fig.mech.regret}.
%
%
Mechanism $\mech_\alg$ proceeds by first simplifying the declaration given by each agent, then passing the simplified declarations to algorithm $\alg$.  The resulting allocation is paired with a payment scheme that charges critical prices.  

\begin{figure}
\begin{center}
	\fbox{
		\begin{minipage}{0.95\linewidth}

\textbf{Mechanism} $\calm_\alg$:
\medskip
\hrule

\medskip

\textbf{Input:} Declaration profile $\decls = \decl_1, \dotsc, \decl_n$.

%
\begin{tabbing}
1. \= $\decl' \leftarrow $ \texttt{SIMPLIFY}$(\decls)$. \\
2. \> Allocate $\alg(\decls')$, charge critical prices.
\end{tabbing}
		\end{minipage}
	}
	\fbox{
		\begin{minipage}{0.95\linewidth}

\textbf{Procedure} \texttt{SIMPLIFY:}
\medskip
\hrule

\medskip
\textbf{Input:} Declaration profile $\decls = \decl_1, \dotsc, \decl_n$.

%
\begin{tabbing}
1. \= For each $i \in [n]$: \\
2. \> \quad \= Choose $S_i \in \argmax_S\{\decli(S)\}$, breaking \\ 
\>\> ties in favour of smaller sets. \\
3. \> \> $\decli' \leftarrow (S_i,\decli(S_i))$. \\
4. \> Return $(\decl_1', \dotsc, \decl_n')$.
\end{tabbing}
		\end{minipage}
	}

			\caption{Mechanism for regret-minimizing bidders, based on monotone algorithm $\alg$.  Uses subprocedure \texttt{SIMPLIFY}.}
			\label{fig.mech.regret}
	\end{center}
\end{figure}

The simplification process \texttt{SIMPLIFY} essentially converts any declaration into a single-minded declaration (and does not affect declarations that are already single-minded).  We will therefore assume without loss of generality that agents make single-minded declarations, as additional information is not used by the mechanism.\footnote{We note, however, that this is not the same as assuming that agents are single-minded; our results hold for bidders with general private valuations.} 

Fix a particular combinatorial auction problem and type profile $\types$, and let $\alg$ be some monotone approximation algorithm.  Let $\decls$ be a declaration profile; we suppose each $\decli$ is a single-minded bid for set $S_i$.
%
We draw the following conclusion about the bidding choices of rational agents. 

\begin{lem}
\label{lem.undom1}
Declaration $\decli$ is an undominated strategy for agent $i$ if and only if $\decli(S_i) = \typei(S_i)$.  
\end{lem}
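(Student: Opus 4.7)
The plan is to adapt the classical truthfulness argument for single-minded bidders under monotone allocation rules paired with critical-price payments. The single technical lever is the observation that under $\mech_\alg$, when agent $i$ submits the single-minded bid $\decli = (S_i, v)$, his utility takes the piecewise form $\utili(\decli,\declsmi) = \typei(S_i) - p$ whenever $v \geq p$ and $0$ otherwise, where $p = \crit_i^\alg(S_i,\declsmi)$. Critically, $p$ depends only on $\declsmi$, and by monotonicity of $\alg$ winning $S_i$ reduces to the threshold condition $v \geq p$.

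For the ``only if'' direction, I assume $\decli = (S_i, v)$ with $v \neq \typei(S_i)$ and show that $\decli^* = (S_i, \typei(S_i))$ weakly dominates $\decli$. A case analysis over the three possible positions of $p$ relative to $v$ and $\typei(S_i)$ yields $\utili(\decli^*,\declsmi) \geq \utili(\decli,\declsmi)$ pointwise: when $v < \typei(S_i)$ the gap opens up whenever $p \in (v, \typei(S_i)]$, because $\decli^*$ wins profitably while $\decli$ loses; when $v > \typei(S_i)$ the gap opens up whenever $p \in (\typei(S_i), v]$, because $\decli^*$ abstains from an unprofitable purchase that $\decli$ makes. To upgrade weak to strict dominance I exhibit one $\declsmi$ forcing $p$ into the relevant open interval, realised by inserting an auxiliary competing single-minded bid on $S_i$ whose value lies strictly between $v$ and $\typei(S_i)$.

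For the ``if'' direction, I assume $\decli = (S_i, \typei(S_i))$ and let $\decli' = (S', v')$ be any purportedly weakly dominating single-minded bid. If $S' = S_i$, then by the ``only if'' direction applied to $\decli'$, $\decli$ in fact weakly dominates $\decli'$, contradicting the converse. If $S' \neq S_i$, I construct a $\declsmi$ on which $\decli$ strictly beats $\decli'$ by introducing an auxiliary bidder whose bid blocks $S'$ for agent $i$ without blocking $S_i$ (for instance, a high single-minded bid on $S' \setminus S_i$, which is non-empty in the typical case); under this profile $\decli$ wins $S_i$ at low critical price and attains utility $\typei(S_i) > 0$, while $\decli'$ obtains $0$.

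The main obstacle is producing the witnessing $\declsmi$ profiles in both directions: we need to calibrate critical prices into prescribed intervals and selectively block certain sets without disturbing others. Loser-independence of $\alg$ is what makes these constructions go through, since it guarantees that auxiliary bidders introduced only to shift thresholds or block specific sets affect agent $i$'s outcome only through their declared values on winning sets, leaving enough flexibility to engineer the desired separation. A small amount of additional care is needed for non-generic configurations in which $S'$ and $S_i$ stand in a containment relationship, which are limited by the tie-breaking conventions built into \texttt{SIMPLIFY}.
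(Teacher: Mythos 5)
Your ``only if'' direction mirrors the paper's argument closely: both compare a non-truthful bid $(S_i,v)$ against the truthful bid $(S_i,\typei(S_i))$, observe that the utility is a threshold function of the critical price $p=\crit_i^\alg(S_i,\declsmi)$, and note that the truthful bid does strictly better exactly when $p$ lands between $v$ and $\typei(S_i)$. The differences, and the gaps, are in how you finish.

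First, to get \emph{strict} dominance you propose to \emph{construct} a witnessing $\declsmi$ by inserting an auxiliary competing bid on $S_i$ at a value calibrated to push $p$ into the desired open interval. But $\alg$ is a black box: monotonicity gives you the threshold structure of winning, not any control over \emph{where} the threshold sits as a function of $\declsmi$. There is no guarantee that an added single-minded competitor of value $w$ produces $\crit_i^\alg(S_i,\declsmi)\approx w$. The paper avoids this by simply \emph{assuming} such a $\declsmi$ exists and relegating the contrary case to a footnote (declaring $\decli$ and $\decli'$ ``equivalent strategies'' when no $\declsmi$ separates them). Your construction is not wrong for the concrete greedy algorithms used later, but it does not follow from the hypotheses of the lemma.

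Second, your ``if'' direction is actually \emph{more} ambitious than the paper's: the paper stops after noting that $\typei(S_i)$ is the pointwise-optimal value among bids for $S_i$, and does not explicitly rule out dominance by bids for a different set $S'$. You try to close that gap, but the construction (a high auxiliary bid on $S'\setminus S_i$) again presumes control over $\alg$'s behavior, and it silently fails when $S'\subsetneq S_i$. Your remark that the $S'\subseteq S_i$ configurations are handled by \texttt{SIMPLIFY}'s tie-breaking is a red herring: \texttt{SIMPLIFY} leaves single-minded declarations unchanged, so it never collapses $S'$ into $S_i$ or vice versa. In fact the containment case is genuinely delicate --- if $\typei(S')=\typei(S_i)$ with $S'\subsetneq S_i$, monotonicity makes the critical price of $S'$ at most that of $S_i$, and a strict separation somewhere would make the truthful bid on $S'$ weakly dominate the truthful bid on $S_i$, which is the sort of ``equivalent strategies'' subtlety the paper's footnote is quietly sweeping away.

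Third, you attribute the auxiliary constructions to loser-independence, but the paper's proof of this lemma uses only monotonicity (and critical pricing). Loser-independence first enters in Lemma~\ref{lem.loserindep}. So the lever you claim is doing the work is not the one the paper pulls, nor does it actually deliver the control over critical prices that your constructions require.
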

\begin{proof}
For all $\declsmi$, $\mech_\alg(\decli,\declsmi)$ either allocates $S_i$ or $\emptyset$ to agent $i$.  
Thus agent $i$'s utility for declaring $\decli$, $\utili(\decli,\declsmi)$, is $\typei(S_i) - \crit_i^{\mech_\alg}(S_i,\declsmi)$ when $\decli(S_i) > \crit_i^{\mech_\alg}(S_i,\declsmi)$, and $0$ otherwise.  A declaration of $\decli(S_i) = \typei(S)$ therefore maximizes $\utili(\decli,\declsmi)$ for all $\declsmi$.

On the other hand, if $\decli(S_i) \neq \typei(S_i)$, let $\decli'$ be the single-minded declaration for $S_i$ at value $\typei(S_i)$.  Then for any $\declsmi$ such that $\criti^\alg(S_i,\declsmi)$ lies between $\decli(S_i)$ and $\typei(S_i)$, $\utili(\decli',\declsmi) > \utili(\decli,\declsmi)$.
For simplicity we will assume such a $\declsmi$ exists; handling the general case requires only a technical and uninteresting extension of notation\footnote{If $\criti^\alg(S_i,\declsmi)$ never lies between $\decli(S_i)$ and $\typei(S_i))$ for any $\declsmi$, then $\mech_\alg(\decli,\declsmi) = \mech_\alg(\decli',\declsmi)$ for all $\declsmi$, so $\decli$ and $\decli'$ are equivalent strategies.  We can therefore think of $\decli$ as being ``the same'' as a single-minded declaration for $S_i$ at value $\typei(S_i)$.  We will ignore this technical issue for the remainder of the paper, in the interest of keeping the exposition simple.}.  Thus declaration $\decli'$ weakly dominates declaration $\decli$.  
%
\end{proof}

One implication of Lemma \ref{lem.undom1} is that the strategic choice of an agent participating in mechanism $\calm_\alg$ reduces to a linear optimization problem.  On each round, we can think of agent $i$ as choosing set $S_i$, which is the set he will attempt to win that round.  Once $S_i$ is chosen, an undominated declaration for agent $i$ is determined: the single-minded declaration for $S_i$ at value $\typei(S_i)$.  Given that agent $i$ chooses set $S_i$, his utility will be $\typei(S_i)-w_i$, where $w_i = \min\{\typei(S_i),\crit_i^\alg(S_i,\declsmi)\}$ is the price for set $S_i$, determined by the declarations of the other agents, capped at $\typei(S_i)$.  Thus, since utilities are linear in the choice of $S_i$, agents can indeed apply the regret-minimization algorithm of Kalai and Vempala \cite{KV-05} to choose strategies that minimize external regret.


We now proceed with bounding the social welfare obtained by $\calm_\alg$.  Let $A_1, \dotsc, A_n$ be an optimal assignment for types $\types$.  Suppose that $D = \decls^1, \dotsc, \decls^T$ is a sequence of declarations to our mechanism.
The definition of regret minimization then immediately implies the following. 

\begin{lem}
\label{lem.minregret}
If agent $i$ minimizes his external regret in bid sequence $D$, then $\frac{1}{T}\sum_t (\typei(\alg(\decls^t)) + \criti^\alg(A_i,\declsmi^t)) \geq \typei(A_i) - o(1)$.
\end{lem}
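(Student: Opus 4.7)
The plan is to apply the regret-minimization hypothesis against a carefully chosen fixed benchmark strategy, namely the single-minded bid $\decli'$ for $A_i$ at value $\typei(A_i)$ (which Lemma~\ref{lem.undom1} already certifies is an undominated declaration). Since under this fixed strategy agent $i$ either wins $A_i$ and pays the critical price or receives $\emptyset$ and pays nothing, monotonicity of $\alg$ gives
\[
\utili(\decli',\declsmi^t) \;=\; \max\bigl\{0,\; \typei(A_i) - \criti^\alg(A_i,\declsmi^t)\bigr\} \;\geq\; \typei(A_i) - \criti^\alg(A_i,\declsmi^t),
\]
where the second inequality is trivial in both cases (the critical price is nonnegative when $\utili = 0$, and exact when $\utili > 0$).

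Next I would invoke the regret-minimization assumption with this particular fixed deviation, to conclude
\[
\sum_t \utili(\decli^t,\declsmi^t) \;\geq\; \sum_t \utili(\decli',\declsmi^t) - o(T) \;\geq\; T\cdot\typei(A_i) - \sum_t \criti^\alg(A_i,\declsmi^t) - o(T).
\]
To connect the left-hand side to $\typei(\alg(\decls^t))$, I would use the straightforward upper bound $\utili(\decli^t,\declsmi^t) \leq \typei(\alg_i(\decls^t))$, which holds because payments in $\calm_\alg$ are nonnegative (critical prices are nonnegative by convention, and losers pay $0$).

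Combining and dividing by $T$ yields the claimed inequality
\[
\frac{1}{T}\sum_t\bigl(\typei(\alg(\decls^t)) + \criti^\alg(A_i,\declsmi^t)\bigr) \;\geq\; \typei(A_i) - o(1).
\]
The only subtlety—and thus the main thing to be careful about—is that the relevant benchmark strategy is chosen \emph{before} seeing $D$ (it depends only on $A_i$ and $\typei$, both fixed), so the definition of external regret applies cleanly; and that the inequality $\utili(\decli',\declsmi^t) \geq \typei(A_i) - \criti^\alg(A_i,\declsmi^t)$ must cover the case when the critical price exceeds $\typei(A_i)$, which is exactly why we take the max with $0$ and then discard it.
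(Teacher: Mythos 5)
Your proof is correct and follows exactly the paper's own argument: take the fixed benchmark $\decli'$ as the single-minded bid for $A_i$ at value $\typei(A_i)$, lower-bound $\utili(\decli',\declsmi^t)$ by $\typei(A_i) - \criti^\alg(A_i,\declsmi^t)$, upper-bound the realized utility by $\typei(\alg(\decls^t))$, and invoke the regret bound. Your added remarks about the max-with-zero case and the benchmark being fixed in advance are careful but not substantive departures from the paper's proof.
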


Assume now that algorithm $\alg$ is loser independent.  We can then relate the value of the solution returned by an algorithm to the critical prices of the sets in an optimal solution.

\begin{lem}
\label{lem.loserindep}
If $\alg$ is a monotone loser-independent $c$-approximate algorithm,  then $\sum_i \decli(\alg(\decls)) \geq \frac{1}{c}\sum_i \crit_i^\alg(A_i,\declsmi)$.
\end{lem}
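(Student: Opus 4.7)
The plan is to reduce to the $c$-approximation guarantee of $\alg$ via an auxiliary declaration profile $\decls^*$ whose optimum exhibits $\sum_i \crit_i^\alg(A_i,\declsmi)$ as the welfare of a naturally feasible assignment, and then to transfer the resulting welfare bound back to $\decls$ using loser-independence.

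For the construction, let $\decli^*$ be the single-minded declaration for $A_i$ at value $\crit_i^\alg(A_i,\declsmi)$, and set $\decls^* = (\decli^*)_i$. Since $(A_1,\dotsc,A_n)$ is itself a feasible allocation, the assignment $X_i = A_i$ is feasible and has welfare $\sum_i \crit_i^\alg(A_i,\declsmi)$ under $\decls^*$. The $c$-approximation property of $\alg$ then yields
\[
\sum_i \decli^*\bigl(\alg_i(\decls^*)\bigr) \;\geq\; \frac{1}{c}\sum_i \crit_i^\alg(A_i,\declsmi).
\]
Because each $\decli^*$ is single-minded on $A_i$, the left-hand side equals $\sum_{i \in W^*}\crit_i^\alg(A_i,\declsmi)$, where $W^* = \{i : A_i \subseteq \alg_i(\decls^*)\}$.

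The remaining step is the crux: to show $\sum_i \decli(\alg_i(\decls)) \geq \sum_{i \in W^*}\crit_i^\alg(A_i,\declsmi)$, i.e., the declared welfare of $\alg(\decls)$ is at least the total value extracted under the auxiliary profile. My approach is a hybrid argument that walks from $\decls^*$ to $\decls$ by swapping one agent's declaration at a time. For each $i \in W^*$, loser-independence constrains $i$'s outcome to depend only on the winners-when-$i$-removed profile of the other agents, so swapping a loser's declaration does not disturb $i$'s allocation; when a swap does dislodge $i$, monotonicity forces some competitor to win a set of declared value at least $\crit_i^\alg(A_i,\declsmi)$, absorbing $i$'s contribution elsewhere in the sum.

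The main obstacle is sequencing the hybrid so that loser-independence is applicable at each step --- it is a statement about a single distinguished agent, and a naive sequential swap may temporarily violate the invariants that later applications rely on. I would process agents outside $W^*$ first, then those in $W^*$ in decreasing order of critical price, maintaining the invariant that the intermediate declared welfare is at least the cumulative critical price of already-finalized $W^*$ agents. Formalizing this invariant and handling the case where a single-minded declaration reshapes the winners-when-$i$-removed profile is the delicate technical step.
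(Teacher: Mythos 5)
Your proposal takes a genuinely different route from the paper's, and the route has a real gap in the transfer step.

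The paper's proof is a one-shot modification designed so that the allocation does not change at all: it takes $\decli'$ to be the \emph{pointwise maximum} of the original $\decli$ with a single-minded bid for $A_i$ at value $\crit_i^\alg(A_i,\declsmi) - \eps$, strictly below the critical price. Because each agent's declaration is only raised on sets containing $A_i$, and only to a value at which they still lose $A_i$, loser-independence and monotonicity give $\alg(\decls') = \alg(\decls)$ and unchanged critical prices. The $c$-approximation then applies directly to $\decls'$ (where $(A_1,\dotsc,A_n)$ has declared welfare $\geq \sum_i(\crit_i^\alg(A_i,\declsmi)-\eps)$), and since $\decli'$ agrees with $\decli$ on all sets $i$ can actually win, $\decli'(\alg(\decls')) = \decli(\alg(\decls))$ and the bound transfers immediately. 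Let $\eps \to 0$.

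Your construction discards the original declarations entirely, replacing each $\decli$ by a single-minded bid for $A_i$ at exactly the critical price. This forfeits the key invariant $\alg(\decls^*) = \alg(\decls)$, and the burden then falls on the hybrid argument, which does not go through as sketched. Concretely: (1) swapping a ``loser'' $j$'s declaration to $\decl_j^*$ can make $j$ a winner in the ``winners-when-$i$-removed'' profile, because $\crit_j^\alg(A_j,\declsmi[j])$ is computed \emph{with agent $i$ present}, and $j$'s critical price for $A_j$ when $i$ is absent may be strictly smaller; so the hypothesis of loser-independence is not preserved across the hybrid and you cannot conclude $i$'s outcome is undisturbed. (2) The claim that ``when a swap does dislodge $i$, monotonicity forces some competitor to win a set of declared value at least $\crit_i^\alg(A_i,\declsmi)$'' is not a consequence of monotonicity; monotonicity only constrains a single agent's own outcome as their own bid varies, and gives no welfare-exchange guarantee between agents. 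Your proposed ordering (non-$W^*$ agents first, then decreasing critical price) does not repair either issue, since critical prices themselves shift as intermediate profiles change. A smaller point: bidding exactly the critical price, rather than $\crit - \eps$, hits a boundary case, since the critical price is an infimum and winning at exactly that value is not guaranteed.

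The missing idea is to perturb declarations in a way that provably leaves the outcome untouched — take pointwise maxima with bids strictly below critical prices — so that no transfer argument is needed at all.
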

\begin{proof}
Choose some $\eps > 0$.  For each $i$, let $\decli'$ be the pointwise maximum between $\decli$ and the single-minded declaration for set $A_i$ at value $\crit_i^{\alg}(A_i,\declsmi) - \eps$.  The loser independence property implies that we can perform this operation independently for each agent (i.e. without affecting critical prices), and moreover $\alg(\decls') = \alg(\decls)$.  Since $\alg$ is a $c$-approximate algorithm, $\sum_i \decli'(\alg(\decls')) \geq \frac{1}{c}\sum_i \decli'(A_i) \geq \frac{1}{c}\sum_i (\crit_i^\alg(A_i,\declsmi)-\eps)$.  Additionally, since $\decli'(T) = \decli(T)$ whenever $\decli(T) \geq \crit_i^\alg(T,\declsmi)$ (from the definition of $\decli'$), we have $\decli'(\alg(\decls')) = \decli(\alg(\decls))$ for all $i$.  We conclude that $\sum_i \decli(\alg(\decls)) \geq \frac{1}{c}\sum_i (\crit_i^\alg(A_i,\declsmi)-\eps)$ for all $\eps > 0$, as required.
\end{proof}

We are now ready to proceed with the proof of our main result in this section.  

\begin{thm}
\label{thm.regretmin}
Any monotone loser-independent $c$-approximate algorithm can be implemented as a mechanism with $c + 1$ price of total anarchy.
\end{thm}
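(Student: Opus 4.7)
The plan is to combine the three preceding lemmas via a double-counting argument. By Lemma~\ref{lem.undom1}, I assume without loss of generality that at each round $t$ every agent $i$ submits an undominated, single-minded declaration $\decli^t = (S_i^t, \typei(S_i^t))$ for some set $S_i^t \in 2^M$; this is justified because any other strategy is either weakly dominated or equivalent to such a bid under $\mech_\alg$. The setup therefore reduces to regret-minimization over the choice of the target set $S_i^t$.

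Next, I apply Lemma~\ref{lem.minregret} separately to each agent $i$, benchmarking against the fixed declaration $(A_i, \typei(A_i))$, and sum over $i \in [n]$ to obtain
\[
\frac{1}{T}\sum_t SW_\alg(\decls^t) \;\geq\; SW_{opt} \;-\; \frac{1}{T}\sum_t \sum_i \criti^\alg(A_i,\declsmi^t) \;-\; o(1).
\]
This reduces the task to bounding the average total critical price of the optimal bundles by the average welfare of $\mech_\alg$.

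For that, I invoke Lemma~\ref{lem.loserindep} pointwise at each round $t$:
\[
\sum_i \criti^\alg(A_i,\declsmi^t) \;\leq\; c \sum_i \decli^t(\alg_i(\decls^t)).
\]
Since $\decli^t$ is single-minded for $S_i^t$ at value $\typei(S_i^t)$, the quantity $\decli^t(\alg_i(\decls^t))$ equals either $\typei(S_i^t)$ (when $S_i^t \subseteq \alg_i(\decls^t)$) or $0$, and in either case monotonicity of $\typei$ gives $\decli^t(\alg_i(\decls^t)) \leq \typei(\alg_i(\decls^t))$. Hence $\sum_i \criti^\alg(A_i,\declsmi^t) \leq c \cdot SW_\alg(\decls^t)$ for every round $t$.

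Substituting this into the regret inequality and rearranging yields
\[
(1+c)\cdot \frac{1}{T}\sum_t SW_\alg(\decls^t) \;\geq\; SW_{opt} - o(1),
\]
so $\mech_\alg$ achieves a $(c+1)$-approximation on average as $T \to \infty$. The main (mild) subtlety is the step that bridges declared value and true value on the allocated bundle: one must check that feeding single-minded declarations through $\alg$ never inflates $\decli^t(\alg_i(\decls^t))$ beyond $\typei(\alg_i(\decls^t))$, which is the content of the monotonicity observation above. Everything else is a one-line sum-and-substitute.
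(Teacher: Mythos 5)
Your proof is correct and follows essentially the same route as the paper: reduce to single-minded truthful declarations via Lemma~\ref{lem.undom1}, sum the regret bound of Lemma~\ref{lem.minregret} over agents, apply Lemma~\ref{lem.loserindep} pointwise, and rearrange to absorb the critical-price term into $(c+1)$ times the average welfare. The only cosmetic difference is that you use the inequality $\decli^t(\alg_i(\decls^t)) \leq \typei(\alg_i(\decls^t))$ where the paper asserts equality (both suffice in the direction needed), and you suppress the factor of $n$ in the $o(1)$ term, which the paper also ultimately dismisses since the regret term vanishes with $T$ independently of $n$.
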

\begin{proof}
Let $D = \decls^1, \dotsc, \decls^T$ be a sequence of declarations in which all agents minimize external regret.
By Lemma \ref{lem.minregret}, $\frac{1}{T}\sum_t (\typei(\alg(\decls^t)) + \crit_i^\alg(A_i,\declsmi^t)) \geq \typei(A_i) - o(1)$.  Summing over all $i$, we have $\frac{1}{T}\sum_t\sum_i (\typei(\alg(\decls^t)) + \crit_i^\alg(A_i,\declsmi^t)) \geq SW_{OPT} - (n)(o(1))$.  By Lemma \ref{lem.loserindep}, this implies $\frac{1}{T}\sum_t\sum_i (\typei(\alg(\decls^t)) + c\decl_i^t(\alg(\decls^t))) \geq SW_{OPT} - (n)(o(1))$.  We know $\decl_i^t(\alg(\decls^t)) = \typei(\alg(\decls^t))$ for all $i$ and $t$ by Lemma \ref{lem.undom1}, so we conclude $(c+1)\frac{1}{T}\sum_t\sum_i \typei(\alg(\decls^t)) \geq SW_{OPT} - (n)(o(1))$.  Since the term hidden by the asymptotic notation vanishes with $T$ and does not depend on $n$, we obtain the desired result.
\end{proof}

Theorem \ref{thm.regretmin} is very general, as it applies to a number of different problem settings in which loser-independent monotone approximation algorithms are known.  As a few particular examples, Theorem \ref{thm.regretmin} yields an $O(\sqrt{m})$ implementation of the combinatorial auction problem \cite{LOS-99}, an $s+1$ implementation of the combinatorial auction problem where sets are restricted to cardinality $s$ (using a simple greedy algorithm), an $O(m^{1/(B-1)})$ implementation of the unsplittable flow problem with minimum edge capacity $B$ \cite{BKV-05}, and an $O(R^{4/3})$ implementation of the combinatorial auction of convex bundles in the plane where $R$ is the maximum aspect ratio over all desired bundles \cite{BB-04}.

We note that, since agents experience no regret at a pure Nash equilibrium, an immediate corollary to Theorem \ref{thm.regretmin} is that any monotone loser-independent $c$-approximate algorithm can be implemented as a mechanism with $c + 1$ price of anarchy.  We remark that an alternative proof of this result has been given recently using a different mechanism construction \cite{BL-10}.

Also, the rate at which the welfare obtained by $\calm_\alg$ converges to an average that is a $c+1$ approximation to optimal depends on the rate of convergence of players' external regret to $0$.  The average welfare obtained after $T$ rounds will have an additive loss of $(n)(r(T))$, where $r(T)$ is the average regret experienced by an agent after $T$ rounds.  Assuming that agents apply algorithms that minimize regret at a rate of $r(T) = o(1/\sqrt{T})$, which is easily attainable using the algorithm of Kalai and Vempala \cite{KV-05}, the additive error term is at most a constant when $T$ is at least quadratic in $n$.

\subsection{Resilience to Byzantine Agents}

Suppose that in addition to regret-minimizing agents, the auction participants include byzantine agents.  The only restriction we impose on the behaviour of such agents is that they do not overbid on any set; that is $\decli(S) \leq \typei(S)$ for any $S$ and byzantine agent $i$.  We can motivate this restriction either through our characterization of undominated strategies in Lemma \ref{lem.undom1}, or by thinking of byzantine players as not understanding how to participate rationally in the auction, and hence likely to be conservative in the way that they participate.  Under this assumption, since Lemma \ref{lem.loserindep} holds for any declaration profile, we easily obtain the following generalization to Theorem \ref{thm.regretmin}.

\begin{prop}
\label{prop.resil}
Suppose $\alg$ is a monotone loser-independent $c$-approximate algorithm and $D$ is a declaration sequence for $\mech_\alg$.  If $N \subseteq [n]$ is a collection of agents that minimize regret in $D$, and the remaining agents never bid more than their true values on any set in $D$, then $\frac{1}{T}\sum_t SW_\alg(\decls^t) \geq \frac{1}{c+1}\sum_{i \in N}SW_{opt} + |N|(o(1))$.
\end{prop}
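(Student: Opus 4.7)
The plan is to imitate the proof of Theorem \ref{thm.regretmin}, restricting the sums to the regret-minimizing agents $N$ and substituting the no-overbidding hypothesis for Lemma \ref{lem.undom1} wherever byzantine players appear.

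First I would fix $A_1,\dotsc,A_n$ to be an optimal feasible allocation in which only agents in $N$ receive nonempty sets, so that $\sum_{i\in N}\typei(A_i)$ is the welfare optimum achievable by the regret-minimizing coalition. Because valuations are normalized, $\criti^\alg(\emptyset,\declsmi^t)=0$ for $i\notin N$, so for every round $t$ we have $\sum_i \criti^\alg(A_i,\declsmi^t)=\sum_{i\in N}\criti^\alg(A_i,\declsmi^t)$, which lets me freely insert or remove the $i\notin N$ terms in critical-price sums.

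Next I would apply Lemma \ref{lem.minregret} to each $i\in N$ (valid because those agents minimize external regret) and sum over $N$ to obtain
\[ \frac{1}{T}\sum_t\sum_{i\in N}\bigl(\typei(\alg(\decls^t))+\criti^\alg(A_i,\declsmi^t)\bigr)\;\geq\;\sum_{i\in N}\typei(A_i) - |N|\cdot o(1). \]
Since Lemma \ref{lem.loserindep} was proved for an arbitrary declaration profile, it gives for each round $\sum_i \decli^t(\alg(\decls^t))\geq\frac{1}{c}\sum_{i\in N}\criti^\alg(A_i,\declsmi^t)$. The key substitution, which distinguishes the two agent types, is on the left side: for $i\in N$ Lemma \ref{lem.undom1} still yields $\decli^t(\alg(\decls^t))=\typei(\alg(\decls^t))$, while for $i\notin N$ the no-overbidding hypothesis gives the inequality $\decli^t(\alg(\decls^t))\leq\typei(\alg(\decls^t))$. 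Summing over all $i$ produces $\sum_i\decli^t(\alg(\decls^t))\leq SW_\alg(\decls^t)$, hence $\sum_{i\in N}\criti^\alg(A_i,\declsmi^t)\leq c\cdot SW_\alg(\decls^t)$. Combined with the trivial bound $\sum_{i\in N}\typei(\alg(\decls^t))\leq SW_\alg(\decls^t)$ and substituted into the summed regret inequality, this gives $(c+1)\frac{1}{T}\sum_t SW_\alg(\decls^t)\geq\sum_{i\in N}\typei(A_i)-|N|\cdot o(1)$, which is the claim after dividing by $c+1$.

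The main obstacle is precisely the last inequality, which would collapse without the no-overbidding assumption: byzantine players could bid arbitrarily large values, inflating $\sum_i\decli^t(\alg(\decls^t))$ far beyond the realized true welfare and severing the link Lemma \ref{lem.loserindep} provides between critical prices and $SW_\alg$. Given no-overbidding, the rest is essentially bookkeeping inherited from the proof of Theorem \ref{thm.regretmin}.
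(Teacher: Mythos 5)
Your proof is correct and follows essentially the same route as the paper's: apply Lemma \ref{lem.minregret} to the agents in $N$, invoke Lemma \ref{lem.loserindep} round-by-round, and finish with $\decli^t(\alg(\decls^t)) \leq \typei(\alg(\decls^t))$ for all $i$, using Lemma \ref{lem.undom1} for agents in $N$ and the no-overbidding hypothesis for the byzantine agents. The only cosmetic difference is that you set $A_i = \emptyset$ for $i \notin N$ and use $\criti^\alg(\emptyset,\cdot)=0$ to drop those terms, whereas the paper keeps $A$ arbitrary and simply extends the sum over $N$ to a sum over $[n]$ using non-negativity; both are fine.
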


\subsection{Importance of Loser-Independence}

We note that the loser independence property is necessary for Theorem \ref{thm.regretmin}, as the following example demonstrates.

\begin{example}
Consider an auction problem in which no agent can be allocated more than $s$ objects, and moreoever $M = A \cup B$ where $|A| = |B| = m/2$ and the mechanism must either allocate objects in $A$ or objects in $B$, but not both.  Consider the algorithm that takes the maximum over two solutions: a greedy assignment of subsets of $A$, and a greedy assignment of subsets of $B$.  This algorithm obtains an $s+1$ approximation.

Consider now an instance of the problem in which a single agent desires all of $B$ with value $1$, and each of $m/2$ agents desires a separate singleton in $A$ with value $1-\eps$.  Suppose that the agent desiring $B$ declares his valuation truthfully, but the other agents declare the zero valuation.  On this input, the algorithm under consideration obtains only an $m/2$ approximation to the optimal solution.  However, this set of declarations forms a Nash equilibrium, and hence each agent has zero regret under this input profile.  Thus, even if agents minimize their regret, our mechanism may obtain a very poor approximation to the optimal social welfare over arbitrarily many auction rounds.  
\end{example}

\section{Best-Response Agents}

In this section we consider the problem of designing mechanisms for agents that apply myopic best-response strategies asynchronously.  Recall that in our model agents are chosen for update uniformly at random, one per round.  In order to keep our exposition clear, we will make two additional assumptions about the nature of the best-response behaviour (which can be removed, as we discuss below).  First, we will suppose that in the initial state every bidder makes the empty declaration $\emptyset$.  Second, we suppose that if a bidder is chosen for update but cannot improve his utility, he will choose to maintain his previous strategy.  These assumptions will simplify the process of characterizing best-response strategies of agents, and in particular the statement of Lemma \ref{lem.separated} in the next section.  It is possible to remove these assumptions, at the cost of a minor modification to the mechanisms we propose.  We defer a more complete discussion to the appendix.

A simple example shows that mechanism $\calm_\alg$ may not converge to a Nash equilibrium via best-response dynamics; this example is presented in the appendix.  We conjecture that, \emph{on average}, the best-response dynamics on mechanism $\calm_\alg$ obtains a good approximation to the optimal social welfare.  

\begin{conj}
\label{conj.best.response}
If $\alg$ is a monotone loser-independent $c$-approximate algorithm, then $\calm_\alg$ has $O(c)$ price of (myopic) sinking.
\end{conj}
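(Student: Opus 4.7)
The plan is to extend the accounting from Theorem~\ref{thm.regretmin} to best-response dynamics by pairing a single-round structural inequality with a potential-based averaging argument on the random-player Markov chain. Fix an optimal assignment $A_1,\dotsc,A_n$, write $\mu_i = \typei(A_i)$ and $w_i^t = \typei(\alg_i(\decls^t))$, and partition the agents at time $t$ into $G_t = \{i : w_i^t \geq \mu_i/4\}$ (``happy'') and $B'_t = \{i \notin G_t : \crit_i^\alg(A_i,\declsmi^t) < \mu_i/2\}$ (``ripe''). For any $i \notin G_t \cup B'_t$ the critical price on $A_i$ is at least $\mu_i/2$, so Lemma~\ref{lem.loserindep} gives $SW_\alg(\decls^t) \geq \tfrac{1}{2c}\sum_{i \notin G_t \cup B'_t}\mu_i$; combined with $SW_\alg(\decls^t) \geq \tfrac{1}{4}\sum_{i \in G_t}\mu_i$ this yields the instantaneous bound $SW_\alg(\decls^t) \geq \Omega(1/c)\bigl(SW_{opt} - \sum_{i \in B'_t}\mu_i\bigr)$. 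Averaging over $T$ rounds, the conjecture reduces to showing that $\tfrac{1}{T}\sum_t \sum_{i \in B'_t}\mu_i$ is at most a constant fraction of $SW_{opt}$ in expectation, for $T$ polynomial in $n$.

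The central observation is that every $i \in B'_t$ is only one update away from joining $G_{t+1}$: if $i$ is chosen at step $t+1$ (probability $1/n$), then bidding $\mu_i$ on $A_i$ yields utility exceeding $\mu_i/2$, so by best-response $u_i^{t+1} \geq \mu_i/2$ and hence $w_i^{t+1} \geq \mu_i/2$. I would track this via the potential $\Phi^t = \sum_i u_i^t$, which is nonnegative and bounded above by $SW_\alg(\decls^t) \leq SW_{opt}$. Since $u_i^t < \mu_i/4$ for any $i \in B'_t$, the conditional expected jump of the chosen agent's utility is at least $\tfrac{1}{4n}\sum_{i \in B'_t}\mu_i$. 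If the expected \emph{decrease} in the utilities of non-chosen agents can be controlled, telescoping $\Phi$ over $T$ rounds will give $\tfrac{1}{T}\sum_t \sum_{i \in B'_t}\mu_i \leq O(n SW_{opt}/T) + \text{(displacement term)}$, and plugging this back into the instantaneous inequality will deliver the conjectured $O(c)$ price of sinking.

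The \emph{main obstacle} is controlling the displacement term. A single update by agent $j$ can evict several previously winning agents, and in the worst case the aggregate utility lost is as large as $SW_\alg(\decls^t)$ itself. I would attempt to bound it via loser-independence: if agent $k$'s allocation changes as a result of $j$'s move, then $j$ must have become a winner in the post-update ``without-$k$'' profile, and $k$'s former critical price is controlled, via Lemma~\ref{lem.loserindep} applied to that profile, by the welfare of the new outcome. The target smoothness-style inequality I would aim for is that the total utility lost by displaced winners when $j$ moves is at most $O(c)\cdot(u_j^{t+1}-u_j^t)$, so that summing over $t$ the displacement telescopes against the total gain of $\Phi$ and is absorbed into the same $SW_{opt}$ budget. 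Whether such a per-step inequality actually holds for arbitrary monotone loser-independent algorithms is, in my view, the technical crux of the conjecture; an alternative route is to work directly with the stationary distribution on each sink component and use a ``flow in equals flow out'' balance between $B'_t$ and $G_t$ to bound $\sum_{i \in B'_t}\mu_i$ in terms of the average welfare generated.
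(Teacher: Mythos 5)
Note at the outset that the statement you set out to prove is, in the paper, explicitly a \emph{conjecture}: the authors never prove it, and call it ``an important open question.'' There is therefore no proof in the paper to compare against. The partial progress the paper does establish (Theorems~\ref{thm.best-response.1} and~\ref{thm.best-response.2}) concerns the purpose-built mechanisms $\calm_{sCA}$ and $\calm_{CA}$, not the generic $\calm_\alg$, and relies on a different engine than yours: the Markov-chain concentration bound of Lemma~\ref{lem.numbad}, applied to a property of agent $i$'s own \emph{declaration} (roughly, ``$i$ is currently bidding at least $\typei(A_i)/2$ on some set''), a property that is persistent between $i$'s own updates. Your set $G_t$ is defined through what $i$ currently \emph{wins}, which is not persistent --- another agent's move can evict $i$ from $G$ without $i$ ever acting --- so the one-step recovery observation you make does not give the ``sticky state'' that Lemma~\ref{lem.numbad} needs, and cannot be fed directly into that machinery.

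The second, more structural problem is that your potential accounting does not close even if you grant the lemma you propose. Telescoping $\Phi^t = \sum_i u_i^t$ gives only $\sum_t \mathrm{gain}_t - \sum_t \mathrm{disp}_t = \Phi^T - \Phi^0 \leq SW_{opt}$; substituting your target $\mathrm{disp}_t \leq O(c)\cdot \mathrm{gain}_t$ with $O(c) \geq 1$ yields $\sum_t\mathrm{gain}_t \leq SW_{opt} + O(c)\sum_t\mathrm{gain}_t$, which is vacuous. A multiplicative constant strictly below $1$ would be needed, but the cycling example in Appendix~C shows $\Phi$ returning to its starting value over a cycle, so total gain there equals total displacement and no such constant can exist. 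What would actually close the argument is either an absolute per-step displacement bound (for instance $\mathrm{disp}_t = O(c\cdot SW_\alg(\decls^t)/n)$), or --- following the paper's route for the special mechanisms --- a welfare lemma analogous to Lemma~\ref{lem.goodevents.welfare} relating $SW_\alg(\decls)$ to $\sum_i \decli(S_i)$ for arbitrary monotone loser-independent $\alg$. It is precisely the absence of such a lemma for general $\alg$ (as opposed to $\alg_{sCA}$, where it follows from Proposition~\ref{prop.sep.mech.sCA} and the separated-declaration structure) that is the real obstacle to Conjecture~\ref{conj.best.response}, and your sketch does not supply it.
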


As partial progress toward resolving Conjecture \ref{conj.best.response}, we construct alternative mechanisms that are more amenable to best-response analysis.  These mechanisms are tailored specifically to the general combinatorial auction problem, and combinatorial auctions with cardinality-restricted sets. Our hope is to demonstrate the intuition behind Conjecture \ref{conj.best.response} and explore mechanism design tools that may prove useful in its resolution.

The primary tool we will use is the following probabilistic lemma, which pertains to any mechanism in a best-response setting.  Suppose $\calm$ is a mechanism, and $D$ is a sequence of best-response declarations for $\calm$.
For any $\decls$, let $P_1(\decls) = P_1(\declsmi)$ be some property of $\decls$ that does not depend on $\decli$, and let $P_2(\decls) = P_2(\decli)$ be some property depending only on $\decli$.

\begin{lem}
\label{lem.numbad}
Suppose that, for any $\decls$, if $P_1(\declsmi)$ is false, then any best response by agent $i$, $\decli$, satisfies $P_2(\decli)$.  Then for all $\eps > 0$, if best-response dynamics is run for $T > \eps^{-1}n$ steps, there will be at least $(\frac{1}{2}-\eps)T$ steps $t$ for which either $P_1(\declsmi^t)$ or $P_2(\decli^t)$ is true, with probability at least $1 - e^{-T\eps^2/32n}$.
\end{lem}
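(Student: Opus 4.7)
Fix the agent $i$ from the hypothesis and call a step $t$ \emph{bad} if both $P_1(\declsmi^t)$ and $P_2(\decli^t)$ are false. The plan hinges on two observations. First, since $P_2(\decli^t)$ depends only on $\decli^t$, and $\decli^t$ is altered only at the random times $\tau_1 < \tau_2 < \dotsc$ at which agent $i$ is the selected bidder, $P_2$ is constant on each epoch $I_k = [\tau_k, \tau_{k+1}-1]$. Call an epoch \emph{good} or \emph{bad} according to this constant value. By the contrapositive of the hypothesis, a bad epoch requires $P_1(\declsmi^{\tau_k})$ to be true at its starting step $\tau_k$ (otherwise the best-response update would force $P_2$ true), so the initial step of every bad epoch is itself good, while good epochs contribute no bad steps at all.

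The crux is a structural trade-off for the other agents' dynamics. In a bad epoch a step is bad precisely when $P_1 = F$, so the non-$i$ agents maximise bad steps by keeping $P_1$ false. But $\declsmi^{\tau_{k+1}} = \declsmi^{\tau_{k+1}-1}$ (only agent $i$ updates at $\tau_{k+1}$), so $P_1 = F$ near the end of $I_k$ forces $P_1(\declsmi^{\tau_{k+1}}) = F$, and then by the hypothesis $I_{k+1}$ must be good. Because $\tau_{k+1}$ is geometric with mean $n$ and its location is unpredictable to the non-$i$ agents, they cannot simultaneously achieve $P_1 = F$ at most interior steps of $I_k$ and $P_1 = T$ at its unknown terminal step. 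I would formalise this conflict by considering the induced four-state Markov chain on $(P_1,P_2)$ driven by the i.i.d.\ agent selections together with any (possibly state-dependent) strategy of the other bidders, and show that the stationary fraction of bad steps is strictly below $1/2$ (the extremal bound $(n-1)/(2n)$ being attained by an alternating bad-good epoch pattern).

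For the high-probability statement, I would apply a Chernoff-style concentration to the sum of per-epoch bad counts. Each epoch contributes at most its length $L_k$ (geometric, mean $n$), and the epochs are driven by the i.i.d.\ agent-selection sequence, so a bounded-differences martingale, or equivalently a Chernoff bound for the four-state chain whose mixing time is $\Theta(n)$, yields a tail of the form $\exp(-\Omega(T\eps^2/n))$. The threshold $T > \eps^{-1}n$ enters precisely to guarantee $\Omega(1/\eps)$ epochs so that per-epoch fluctuations average out. The main obstacle will be extending the $1/2$ bound above to \emph{history-dependent} behaviour of the non-$i$ agents, since in an arbitrary mechanism $\mech$ their best responses need not constitute a stationary Markov strategy; I expect to handle this with an amortised pairing argument that matches each bad step of a bad epoch with a distinct good step from the subsequent forced-good epoch, delivering the structural bound without recourse to stationarity.
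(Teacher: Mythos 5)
Your high-level intuition is right---the epoch structure, the fact that a bad epoch starts with a good step, the ``$P_1$ false at end of $I_k$ forces $I_{k+1}$ good'' observation, and the expected $\approx 1/2$ fraction---but the two tools you propose to formalize it both have genuine gaps, and the paper takes a cleaner route that avoids them entirely.

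The four-state chain on $(P_1,P_2)$ is the first problem: $P_1$ transitions are driven by the other agents' best-response updates, which are a deterministic function of the (growing) history, not a fixed kernel. There is no stationary distribution to appeal to, which is precisely the obstacle you flag. Your proposed fix, an ``amortised pairing'' that matches each bad step of $I_k$ with a distinct good step of the forced-good $I_{k+1}$, does not work pointwise: $|I_k|$ and $|I_{k+1}|$ are independent geometrics, so a long bad epoch can easily overwhelm a short good one; the match-up only holds in expectation, which puts you right back at the concentration question you were trying to sidestep. Moreover, $I_{k+1}$ is forced good only if $P_1$ happens to be false at the last step of $I_k$; the adversary is not obliged to cooperate.

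The paper's proof dissolves both problems with a stochastic-domination argument that never models the other agents at all. It tracks only two events, $A_i^t$ ($P_2$ true) and $B_i^t$ (both false), restricts attention to the random subsequence of steps on which one of these two holds, and observes two one-sided conditional bounds that hold \emph{regardless of history and regardless of what the other agents do}: from a $B$-step, the next such step is an $A$-step with probability at least $1/n$ (agent $i$ is picked and, since $P_1$ was false, any best response delivers $P_2$); from an $A$-step, the next such step is again an $A$-step with probability at least $1-1/n$ ($P_2$ persists unless agent $i$ is re-picked). This lets the number of $B$-steps be dominated by a lazy two-state random walk with flip probability $1/n$; its expectation is $\frac{T}{2}+O(n)$, and concentration follows from bounded average differences over blocks of length $n$. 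Note also the small but essential trick of \emph{skipping} the ``$P_1$ true, $P_2$ false'' steps rather than carrying them as a third or fourth state: those steps are automatically good, and ignoring them is what collapses the chain to two states and makes the domination argument go through without any assumption on the other agents' dynamics. If you want to salvage your write-up, replace the stationary-distribution/pairing machinery with these two conditional inequalities and the domination; the Chernoff-via-bounded-differences step at the end is essentially what you already sketched.
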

\begin{proof}[Proof (sketch)]
Let $B_i^t$ be the event that neither $P_1(\declsmi^t)$ nor $P_2(\decli^t)$ is true, and 
let $A_i^t$ denote the event that $P_2(\decli^t)$ is true.  Our goal is to bound the number of occurrances of $B_i^t$.  

Note that if $B_i^t$ occurs and agent $i$ is chosen for update on step $t+1$, then $A_i^{t+1}$ occurs (by assumption).  Alternatively, if $A_i^t$ occurs but agent $i$ is not chosen for update on step $t+1$ then $A_i^{t+1}$ occurs, since $A_i$ depends only on the declaration of agent $i$. Thus events $A_i^t$ and $B_i^t$ can be compared to a random walk on $\{0,1\}$, where at each step the current state changes with probability $1/n$.  The number of occurrances of $B_i^t$ is dominated by the number of occurrances of $0$ in such a random walk.  A straightforward application of the method of bounded average differences shows that this value is concentrated around its expectation, which is at most $\frac{T}{2} + \frac{n}{2}$.  Thus, as long as $T > \eps^{-1}n$, the number of occurrances of $B_i^t$ will be concentrated at $T(\frac{1}{2}+\eps)$, giving the desired bound.  Additional details are deferred to the appendix. 
%
\end{proof}

\subsection{A Mechanism for $s$-CAs}
\label{sec.mech.sCA}

Consider the $s$-CA problem, which is a combinatorial auction in which no agent can be allocated more than $s$ objects.  An algorithm that greedily assigns sets in descending order by value obtains an $(s+1)$ approximation.\footnote{And an $s$ approximation for single-minded declarations.}  Call this algorithm $\alg_{sCA}$.  We will construct a mechanism $\calm_{sCA}$ based on $\alg_{sCA}$; it is described in Figure \ref{fig.mech.sCA}.  This algorithm simplifies incoming bids (in the same way as $\calm_\alg$) and runs algorithm $\alg_{sCA}$ to find a potential allocation.  However, an additional condition for inclusion in the solution is imposed: the value declared for a set must be larger than the sum of all bids for intersecting sets.  Potential allocations that satisfy this condition are allocated, and the mechanism charges critical prices (that is, the smallest value at which an agent would be allocated their set by $\calm_{sCA}$, which is not necessarily the same as the critical price for $\alg_{sCA}$).

\begin{figure}
\begin{center}
	\fbox{
		\begin{minipage}{0.9\linewidth}

\textbf{Mechanism} $\calm_{sCA}$:
\medskip
\hrule

\medskip

\textbf{Input:} Declaration profile $\decls = \decl_1, \dotsc, \decl_n$.

%
\begin{tabbing}
1. \= $\decl' \leftarrow $ \texttt{SIMPLIFY}$(\decls)$, say $\decli' = (S_i,v_i)$ \\
2. \> $(T_1, \dotsc, T_n) \leftarrow \alg_{sCA}(\decls')$.\\
3. \> For each $i$ such that $T_i \neq \emptyset$:\\
4. \> \quad \= $R \leftarrow \{j : S_j \cap T_i \neq \emptyset \}$. \\
5. \> \> $p_i \leftarrow \sum_{j \in R}\decl_j(S_j)$. \\
6. \> \> If $\decli'(T_i) \leq p_i$, set $T_i \leftarrow \emptyset$, $p_i \leftarrow 0$.\\
7. \> Allocate $T_1, \dotsc, T_n$, charge critical prices.
\end{tabbing}
		\end{minipage}
	}
			\caption{Mechanism $\calm_{sCA}$, an implementation of greedy algorithm $\alg_{sCA}$ for the $s$-CA problem.}
			\label{fig.mech.sCA}
	\end{center}
\end{figure}

We note that since our mechanism implements a monotone algorithm and charges critical prices, Lemma \ref{lem.undom1} implies that undominated strategies for agent $i$ involve choosing a set $S_i$ and making a single-minded bid for $S_i$ at value $\typei(S_i)$.  We will therefore assume that agents bid in this way.

Suppose that $\decls$ is a declaration profile, where each $\decli$ is single-minded for $S_i$.  For any set $T$, define $R_i(\decls,T) = \{j : j \neq i, S_j \cap T \neq \emptyset\}$.  We also define $Q_i(\decls,T) = \{j : j \in R_i(\decls,T), \decl_j(S_j) < \typei(T) \}$.  That is, $R_i$ is the set of bidders other than $i$ whose single-minded declared sets intersect $T$, and $Q_i$ is the subset of those bidders whose single-minded declared values are less than agent $i$'s true value for $T$.  We then say that $\decls$ is \emph{separated for agent $i$} if $\sum_{j \in Q_i(\decls,S_i)}\decl_j(S_j) \leq \decli(S_i)$ and $\decls$ is \emph{separated} if it is separated for every bidder.  Since an agent gains positive utility only if the declaration is separated for him, and since the intial state is the empty declaration profile (which is separated), we draw the following conclusion.

\begin{lem}
\label{lem.separated}
At each step of the best-response dynamics for mechanism $\calm_{sCA}$, the declaration profile submitted by the agents will be separated.
\end{lem}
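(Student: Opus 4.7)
I would prove the lemma by induction on the step $t$, using the observation highlighted just before the statement: if agent $j$'s declaration $(S_j, \typei[j](S_j))$ yields positive utility in some profile $\decls$, then $\decls$ is separated for $j$. This follows directly from the mechanism's filtering condition, since positive utility requires $\typei[j](S_j) > \sum_{\ell \in R_j(\decls,S_j)} \decl_\ell(S_\ell) \geq \sum_{\ell \in Q_j(\decls,S_j)} \decl_\ell(S_\ell)$ (using $Q_j \subseteq R_j$). The base case is immediate: in $\decls^0$ every bidder plays $\emptyset$, so $R_i(\decls^0,\emptyset) = \emptyset$ and separation reduces to $0 \le 0$.

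For the inductive step, assume $\decls^t$ is separated and let $i$ be the agent chosen at step $t{+}1$. By the paper's ``keep previous strategy if no improvement'' assumption, either $\decli^{t+1} = \decli^t$ (so $\decls^{t+1} = \decls^t$ and separation is inherited) or $i$ switches to a declaration strictly improving his utility. Because $\emptyset$ always yields $0$ utility, a strict improvement implies strictly positive utility, so $i$ wins $S_i^{t+1}$ in $\decls^{t+1}$ and the observation above furnishes separation for $i$.

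It remains to verify separation for every $k \neq i$ in $\decls^{t+1}$. For any $j \notin \{i,k\}$, both $S_j^{t+1}$ and $\decl_j^{t+1}(S_j^{t+1})$ equal their values at time $t$, so membership of $j$ in $Q_k(\decls^{t+1},S_k^{t+1})$ is unchanged. It therefore suffices to show $i \notin Q_k(\decls^{t+1},S_k^{t+1})$. If $S_i^{t+1} \cap S_k^t = \emptyset$, then $i \notin R_k(\decls^{t+1},S_k^t)$ and we are done. Otherwise $k \in R_i(\decls^t, S_i^{t+1})$, and the strict winning condition for $i$'s new bid yields $\typei(S_i^{t+1}) > \sum_{j' \in R_i(\decls^t,S_i^{t+1})} \decl_{j'}^t(S_{j'}^t) \geq \decl_k^t(S_k^t) = \typei[k](S_k^t)$, placing $i$'s declared value above the threshold that defines $Q_k$. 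Hence $Q_k(\decls^{t+1},S_k^{t+1}) \subseteq Q_k(\decls^t,S_k^t)$, and the inductive hypothesis supplies separation for $k$.

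The principal obstacle is precisely this cross-agent step: when $i$ updates, $i$ may join $R_k$, so it is not immediate that separation for $k$ survives. The crux is converting the mechanism's strict winning inequality for $i$ into the type-level inequality $\typei(S_i^{t+1}) > \typei[k](S_k^t)$, which ejects $i$ from the value-restricted set $Q_k$. The remaining pieces — the base case, the no-switch branch, and the reduction to single-minded undominated bids via Lemma~\ref{lem.undom1} — are direct bookkeeping given the framework already established.
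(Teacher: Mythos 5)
Your proof is correct and follows essentially the same route as the paper's: induction on the number of steps, with the inductive step split into (a) separation for the updating agent $i$, obtained directly from the strict winning condition required for positive utility, and (b) separation for every other agent $k$, obtained by showing $i$ cannot enter $Q_k$ because his new declared value strictly exceeds the declared values of all agents whose sets intersect his. Your write-up is a bit more explicit than the paper's (you spell out the base case, the no-switch branch, and the invariance of $Q_k$-membership for $j \notin \{i,k\}$), but the argument is the same.
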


For the remainder of the section we will assume that declaration profiles are separated.  Under this assumption, the behaviour of mechanism $\calm_{sCA}$ simplifies in a fortuitous way.

\begin{prop}
\label{prop.sep.mech.sCA}
If $\decls$ is separated, then $\mech_{sCA}$ allocates $S_i$ to agent $i$ precisely when $\decli(S_i) > \max_{j \in R_i(S_i,\decls)}\decl_j(S_j)$.
\end{prop}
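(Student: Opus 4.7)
The plan is to prove the equivalence by treating the two implications separately, after first using Lemma~\ref{lem.undom1} to restrict attention to single-minded declarations $\decli = (S_i, \typei(S_i))$ with $\decli(S_i) = \typei(S_i)$. Under this reduction the greedy subroutine $\alg_{sCA}$ on step~2 either awards $T_i = S_i$ or leaves $T_i = \emptyset$, so the question of whether $\mech_{sCA}$ allocates $S_i$ to agent $i$ decomposes into (a) whether greedy sets $T_i = S_i$, and (b) whether the test on step~6 preserves it.

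The forward direction is the easier one. If $\mech_{sCA}$ allocates $S_i$ to $i$, then $T_i = S_i$ survives step~6, giving $\decli(S_i) > p_i = \sum_{j \in R_i(\decls, S_i)} \decl_j(S_j)$. Each $\decl_j(S_j)$ is non-negative, so the sum dominates the maximum and $\decli(S_i) > \max_{j \in R_i(\decls, S_i)} \decl_j(S_j)$ follows immediately; this direction invokes neither separation nor any property of the greedy order.

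For the reverse direction, assume $\decli(S_i) > \max_{j \in R_i(\decls, S_i)} \decl_j(S_j)$. To handle (a) I would observe that $\alg_{sCA}$ processes declarations in descending order of value, and that since $\decli(S_i)$ strictly exceeds every conflicting declared value, agent $i$ is reached before any $j \in R_i(\decls, S_i)$; at that moment every previously processed bidder $j$ satisfies $S_j \cap S_i = \emptyset$ (otherwise $j \in R_i$), so all items of $S_i$ are still available and greedy awards $T_i = S_i$. For (b) I would use the separation hypothesis: the strict gap between $\decli(S_i)$ and every $\decl_j(S_j)$, $j \in R_i(\decls, S_i)$, forces $Q_i(\decls, S_i) = R_i(\decls, S_i)$, and separation then gives $p_i = \sum_{j \in R_i(\decls, S_i)} \decl_j(S_j) \leq \decli(S_i)$, which is exactly what step~6 tests.

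The subtle point — and where I expect to spend the most care — is reconciling the non-strict bound $p_i \leq \decli(S_i)$ delivered by separation with the strict inequality literally required by step~6. I plan to address this by invoking the dynamics-level conventions behind Lemma~\ref{lem.separated}: starting from the empty profile, and breaking ties in favour of retaining one's current strategy, the equality case $p_i = \decli(S_i)$ cannot arise along a best-response trajectory for an allocated bidder (since entering such a state yields zero net utility rather than an improvement). This wrinkle is the only real obstacle in the argument; everything else is a direct unrolling of the pseudocode of $\mech_{sCA}$ together with the descending-value processing order used by $\alg_{sCA}$.
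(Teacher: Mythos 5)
Your proposal follows essentially the same route as the paper's proof: split the ``precisely when'' into two implications, use the fact that the sum dominates the max for the easy direction, and for the reverse direction combine $Q_i(S_i,\decls) = R_i(S_i,\decls)$ (which follows from the max hypothesis plus $\decli(S_i)=\typei(S_i)$) with the separation bound $\sum_{j\in Q_i}\decl_j(S_j)\le\decli(S_i)$ to pass the step-6 test. You spell out the greedy-order argument for why $\alg_{sCA}$ awards $S_i$, which the paper states without justification, but this is a detail rather than a different approach.

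The place where you diverge is that you explicitly flag the strict-versus-weak mismatch: separation delivers $p_i \le \decli(S_i)$, while step 6 demands $\decli(S_i) > p_i$. You are right that this is a real gap, and in fact the paper's own proof contains a garbled sentence (``so $\decli(S_i) > \sum\ldots$ implies $\decli(S_i) > \sum\ldots$'') at exactly the point where the strictness ought to be justified, so the paper never actually closes it either. However, your proposed repair is the wrong tool for this statement: Proposition~\ref{prop.sep.mech.sCA} is a static claim quantified over arbitrary separated profiles $\decls$, so an appeal to best-response trajectories from the empty profile cannot prove it as stated — it can at most argue that the bad boundary case never occurs in the profiles that matter later. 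If you want to close the gap cleanly you should either (i) argue that in the equality case the proposition is vacuously consistent because the ``allocates $S_i$'' side also fails, and then weaken the ``precisely when'' to the one-sided implication that is actually used downstream in Lemma~\ref{lem.goodevents.welfare}, or (ii) note that the separation invariant, as established in the proof of Lemma~\ref{lem.separated}, is actually strict for any agent who most recently changed his bid to gain positive utility, and propagate strictness from there. As written, your dynamics-based tiebreak argument gestures at (ii) but is not a proof of the proposition itself.
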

\begin{proof}
If $\decli(S_i) > \max_{j \in R_i(S_i,\decls)}\decl_j(S_j)$, then $S_i$ is allocated by $\alg_{sCA}(\decls)$.  Furthermore $Q_i(S_i,\decls) = R_i(S_i,\decls)$, so $\decli(S_i) > \sum_{j \in R_i(S_i,\decls)}\decl_j(S_j)$ implies $\decli(S_i) > \sum_{j \in R_i(S_i,\decls)}\decl_j(S_j)$ and hence $S_i$ will be allocated by $\mech_{sCA}$.  On the other hand, if $\decli(S_i) \leq \max_{j \in R_i(S_i,\decls)}\decl_j(S_j)$, then certainly $\decli(S_i) \leq \sum_{j \in R_i(S_i,\decls)}\decl_j(S_j)$ so $S_i$ is not allocated by $\mech_{sCA}$.
\end{proof}

Let $A_1, \dotsc, A_n$ be an optimal allocation with respect to the agents' true types $\types$.

\begin{prop}
\label{prop.sep.response}
If $\decls$ is separated and $\sum_{j \in R_i(\decls,A_i)}\decl_j(S_j) < \frac{1}{2}\typei(A_i)$, then any utility-maximizing declaration for agent $i$, $\decli$, will be a single-minded declaration for some $S_i$ with $\decli(S_i) \geq \frac{1}{2}\typei(A_i)$.
\end{prop}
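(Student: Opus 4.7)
My plan is to exhibit the single-minded declaration $(A_i, \typei(A_i))$ as a specific alternative strategy, compute a lower bound on its utility, and then conclude that any utility-maximizing response must match or exceed this utility---which forces the chosen set to have value at least $\frac{1}{2}\typei(A_i)$.

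First I would recall that agents bid single-minded at true value (as the discussion preceding Proposition~\ref{prop.sep.mech.sCA} argues using Lemma~\ref{lem.undom1}), so a best response is determined by the choice of some set $S_i^*$ and a bid of $\typei(S_i^*)$. The goal then reduces to lower-bounding $\typei(S_i^*)$.

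Next I would analyze the hypothetical deviation where agent $i$ declares $(A_i, \typei(A_i))$. Since $R_i(\decls, A_i)$ depends only on $\declsmi$, it is unchanged by agent $i$'s deviation, and the new profile $\decls'$ remains separated for $i$ (the bound $\sum_{j\in R_i(\decls,A_i)}\decl_j(S_j) < \frac{1}{2}\typei(A_i) \leq \typei(A_i)$ subsumes the bound required for separation). By Proposition~\ref{prop.sep.mech.sCA}, agent $i$ wins $A_i$ provided $\typei(A_i)$ exceeds $\max_{j \in R_i(\decls, A_i)} \decl_j(S_j)$, which holds since the max is dominated by the sum. Tracing through the steps of $\calm_{sCA}$, the price $p_i$ computed in step~5 equals $\sum_{j \in R_i(\decls,A_i)}\decl_j(S_j)$, and the critical price charged is exactly this $p_i$ (in the separated regime, the sum-threshold is tighter than the greedy-conflict threshold, so it governs). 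Hence the deviation yields utility at least $\typei(A_i) - p_i > \frac{1}{2}\typei(A_i)$.

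Finally, since $\decli$ is utility-maximizing, its utility must weakly exceed that of the deviation, so agent $i$'s realized utility is at least $\frac{1}{2}\typei(A_i)$. Because any undominated single-minded declaration has $\decli(S_i^*) = \typei(S_i^*)$ and because utility equals value minus a nonnegative payment, we get $\typei(S_i^*) \geq \utili \geq \frac{1}{2}\typei(A_i)$, which is the claim. The only delicate step is the identification of the critical price, so I would double-check that in the separated regime the sum-based cutoff $p_i$ really is the binding one---this is the part of the argument most prone to slipping.
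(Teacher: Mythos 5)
Your proof is correct and takes essentially the same route as the paper's: identify the critical price for $A_i$ as $\crit_i^{\mech_{sCA}}(A_i,\declsmi) = \sum_{j \in R_i(\decls,A_i)}\decl_j(S_j)$, lower-bound the utility of the candidate deviation $(A_i,\typei(A_i))$ by more than $\frac{1}{2}\typei(A_i)$, and transfer this bound to the best response via Lemma~\ref{lem.undom1}. The extra care you take in tracing through the steps of $\calm_{sCA}$ to confirm the sum-based cutoff is binding is a sound elaboration of what the paper states directly.
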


For declaration profile $\decls$, let $G$ denote the set of agents $i$ for which either $\sum_{j \in R_i(\decls,A_i)}\decl_j(S_j) > \frac{1}{2}\typei(A_i)$ or  $\decli(S_i) \geq \frac{1}{2}\typei(A_i)$.  We now bound the social welfare obtained by $\calm_{sCA}$ with respect to the optimal assignment to agents in $G$.

\begin{lem}
\label{lem.goodevents.welfare}
$$SW_{\mech_{sCA}}(\decls) \geq \frac{1}{4(s+1)}\sum_{i \in G}\typei(A_i).$$
\end{lem}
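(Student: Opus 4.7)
The plan is to first upper-bound $\sum_{i \in G} \typei(A_i)$ by a weighted sum of the bidders' declared values, and then use the separation property of $\decls$ together with Proposition~\ref{prop.sep.mech.sCA} to relate that sum back to the mechanism's social welfare.

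First I would partition $G$ into $G_b := \{i \in G : \decli(S_i) \geq \tfrac{1}{2}\typei(A_i)\}$ and $G_a := G \setminus G_b$; by the definition of $G$, each $i \in G_a$ must satisfy $\sum_{j \in R_i(\decls, A_i)} \decl_j(S_j) \geq \tfrac{1}{2}\typei(A_i)$. Applying these bounds termwise gives
$$\sum_{i \in G} \typei(A_i) \;\leq\; 2 \sum_{i \in G_b} \decli(S_i) \;+\; 2 \sum_{i \in G_a} \sum_{j \in R_i(\decls, A_i)} \decl_j(S_j).$$
I would then swap the order of summation in the double sum: because the $A_i$ are pairwise disjoint and $|S_j| \leq s$, each bidder $j$ appears in $R_i(\decls, A_i)$ for at most $s$ values of $i$. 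Together with the trivial bound $\sum_{i \in G_b} \decli(S_i) \leq \sum_j \decl_j(S_j)$, this yields $\sum_{i \in G} \typei(A_i) \leq 2(s+1) \sum_j \decl_j(S_j)$.

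Second, I would argue that $\sum_j \decl_j(S_j) \leq 2 \cdot SW_{\mech_{sCA}}(\decls)$. The separation property gives, for each winner $w \in W$, the inequality $\sum_{j \in Q_w(\decls,S_w)} \decl_j(S_j) \leq \decl_w(S_w)$, so summing over $w$ yields $\sum_{w \in W} \sum_{j \in Q_w(\decls, S_w)} \decl_j(S_j) \leq SW_{\mech_{sCA}}(\decls)$. Since two winners cannot have intersecting sets (by Proposition~\ref{prop.sep.mech.sCA}), only non-winning bidders contribute to the left-hand side; if every non-winning bidder $j$ with $\decl_j(S_j) > 0$ has at least one adjacent winner with strictly larger declaration, then the multiplicities $|\{w \in W : j \in Q_w(\decls,S_w)\}|$ are at least $1$, giving $\sum_{j \notin W} \decl_j(S_j) \leq SW_{\mech_{sCA}}(\decls)$ and hence $\sum_j \decl_j(S_j) \leq 2 \cdot SW_{\mech_{sCA}}(\decls)$. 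Chaining with the first step produces the desired bound $SW_{\mech_{sCA}}(\decls) \geq \sum_{i \in G}\typei(A_i) / (4(s+1))$.

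The main obstacle will be establishing that each non-winning bidder with positive declaration can be charged to an adjacent winner, since Proposition~\ref{prop.sep.mech.sCA} only guarantees an adjacent \emph{bidder} (possibly another non-winner) with no smaller declaration. For general separated profiles one can build path-like configurations where a non-winner is surrounded only by other non-winners, so the naive charging fails; this is where I expect to need additional structure inherited from the best-response dynamics---specifically, that every bidder with $\decli \neq \emptyset$ was a winner at the time of his last update---in order to iteratively charge non-winner declarations along a chain of intersecting sets to a winner at the end while keeping the aggregate loss in the chain a constant.
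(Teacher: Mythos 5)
Your decomposition $G = G_a \cup G_b$ and the reduction to the single inequality $\sum_j \decl_j(S_j) \leq 2\, SW_{\mech_{sCA}}(\decls)$ track the paper's argument essentially step for step (the paper bounds $G_1$ and $G_2$ separately rather than in one combined sum, but the arithmetic and the choice of $2(s+1)$ versus $4 + 4s$ are the same). The substantive content of your write-up is that you flag, correctly, that this final inequality is not established: the charging scheme requires every non-winning bidder with a non-empty bid to be adjacent, in the intersection graph of the declared sets $S_j$, to a \emph{winner} whose declaration is at least as large, whereas Proposition~\ref{prop.sep.mech.sCA} only yields adjacency to some \emph{bidder} of no smaller declaration, who may again be a non-winner. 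This is precisely the step the paper's own proof elides: it asserts ``Proposition~\ref{prop.sep.mech.sCA} implies that for all $j \in [n]$ there is some $i \in N$ such that $S_i$ intersects $S_j$'' with no supporting argument, and that assertion is false in general.

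Your concern is not merely a missing detail; the inequality $\sum_j \decl_j(S_j) \leq 2\, SW_{\mech_{sCA}}(\decls)$ can fail badly for separated profiles, and so can the lemma's own conclusion. Take $S_i = \{i, i{+}1\}$ for $i = 1, \dotsc, n$ (so $s = 2$, and $S_i \cap S_j \neq \emptyset$ iff $|i - j| \le 1$), with single-minded bidders of type $\typei(S_i) = 1 + (i-1)\eps$ for tiny $\eps > 0$, all playing their undominated truthful bids. This profile is separated (each $Q_i = \{i-1\}$ and $\decl_{i-1}(S_{i-1}) \le \decli(S_i)$), and it is reached from the all-$\emptyset$ start by best-responding in the order $1, 2, \dotsc, n$; note that every bidder with $\decli \neq \emptyset$ was a winner at the time of his last update and the profile is a fixed point under the paper's tie-breaking rule, so the extra dynamical structure you hope to exploit both holds and fails to help. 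At this profile greedy accepts alternating sets starting from $S_n$, but step~6 of $\calm_{sCA}$ then discards every accepted set other than $S_n$, since for $i < n$ one has $\decli(S_i) < \decl_{i-1}(S_{i-1}) + \decl_{i+1}(S_{i+1})$. Hence $SW_{\mech_{sCA}}(\decls) = \decl_n(S_n) \approx 1$ while $\sum_j \decl_j(S_j) \approx n$, and taking $A_i = S_i$ on the alternating optimal allocation gives $\sum_{i \in G}\typei(A_i) \approx n/2$, so the claimed bound $SW_{\mech_{sCA}}(\decls) \geq \frac{1}{4(s+1)}\sum_{i \in G}\typei(A_i)$ is violated once $n > 8(s+1)$. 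So the obstacle you identified is a genuine gap shared by your attempt and by the paper's proof, and your proposed repair via chaining along the dynamics cannot succeed without further (currently absent) assumptions.
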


We are now ready to bound the average social welfare of our mechanism, over sufficiently many rounds, with respect to the approximation factor of algorithm $\alg$.

\begin{thm}
\label{thm.best-response.1}
Choose $\eps > 0$ and suppose $D = d^1, \dotsc, d^T$ is an instance of best-response dynamics with random player order, where agents play undominated strategies, and $T > \eps^{-1}n$.  Then 
\[ SW_{\mech_{sCA}}(D) \geq \left(\frac{1}{8(s+1)}-\eps\right)SW_{opt}(\types) \]
with probability at least $1-ne^{-T\eps^2/32n}$.
\end{thm}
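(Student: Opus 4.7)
The plan is to apply Lemma~\ref{lem.numbad} once per agent and then sum the per-round welfare lower bound of Lemma~\ref{lem.goodevents.welfare} over all $T$ rounds. For a fixed agent $i$, define the two predicates
\[
P_1(\declsmi) \equiv \Big[\sum_{j \in R_i(\decls,A_i)}\decl_j(S_j) > \tfrac{1}{2}\typei(A_i)\Big], \quad
P_2(\decli) \equiv [\decli(S_i) \geq \tfrac{1}{2}\typei(A_i)].
\]
Note $P_1$ depends only on $\declsmi$ and $P_2$ only on $\decli$ (since $A_i$ and $\typei$ are fixed, and agent $i$'s chosen set $S_i$ is encoded in $\decli$). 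By Lemma~\ref{lem.separated} every declaration profile arising in $D$ is separated, so Proposition~\ref{prop.sep.response} gives the implication: whenever $P_1(\declsmi^t)$ is false, any undominated best-response by agent $i$ on the next step produces a $\decli$ satisfying $P_2$. This matches exactly the hypothesis of Lemma~\ref{lem.numbad}.

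Applying Lemma~\ref{lem.numbad} separately for each agent $i$ and then taking a union bound over the $n$ agents, the event that \emph{every} agent $i$ satisfies $P_1(\declsmi^t) \vee P_2(\decli^t)$ on at least $(\tfrac12-\eps)T$ of the rounds occurs with probability at least $1 - ne^{-T\eps^2/32n}$. The key observation is that the bracketed condition is precisely $i \in G^t$, where $G^t$ denotes the set $G$ from Lemma~\ref{lem.goodevents.welfare} evaluated at $\decls^t$.

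Conditioning on this high-probability event and summing Lemma~\ref{lem.goodevents.welfare} over $t$, then swapping the order of summation, yields
\[
T\cdot SW_{\mech_{sCA}}(D) \;\geq\; \frac{1}{4(s+1)}\sum_i \typei(A_i)\cdot |\{t : i \in G^t\}| \;\geq\; \frac{(\tfrac12-\eps)T}{4(s+1)}\,SW_{opt}(\types).
\]
Dividing by $T$ and using $(\tfrac12-\eps)/(4(s+1)) \geq \tfrac{1}{8(s+1)}-\eps$ (since $4(s+1)\geq 1$) gives the stated bound.

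The main obstacle is conceptual rather than technical: choosing $P_1$ and $P_2$ so that (i) $P_1 \vee P_2$ coincides with membership in the welfare-relevant set $G^t$ of Lemma~\ref{lem.goodevents.welfare}, and (ii) the implication from Proposition~\ref{prop.sep.response} matches the hypothesis of Lemma~\ref{lem.numbad} on the nose. Once these predicates are aligned, the random-walk concentration inside Lemma~\ref{lem.numbad} handles each agent independently, and only a union bound plus a double-counting argument remain—both routine.
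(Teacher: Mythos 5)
Your proof is correct and follows essentially the same structure as the paper's: apply Lemma~\ref{lem.numbad} agent-by-agent via Proposition~\ref{prop.sep.response}, union-bound over the $n$ agents, then sum Lemma~\ref{lem.goodevents.welfare} over rounds and swap the order of summation. Your version is slightly more explicit in spelling out which predicates play the roles of $P_1$ and $P_2$, but this is the same argument as in the paper.
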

\begin{proof}
Let $G_t$ be the set of agents $G$ from Lemma \ref{lem.goodevents.welfare} on step $t$ (i.e. with respect to declaration $\decls^t$).  Lemma \ref{lem.numbad} and Proposition \ref{prop.sep.response} together imply that each agent $i$ will be in $G_t$ for at least $(\frac{1}{2}-\eps)T$ values of $t$, with probability at least $1-e^{-T\eps^2/32n}$.  The union bound then implies that this occurs for every agent with probability at least $1-ne^{-T\eps^2/32n}$.  Conditioning on the occurrance of this event, Lemma \ref{lem.goodevents.welfare} implies
\begin{align*}
SW_{\mech_{sCA}}(D) & = \frac{1}{T}\sum_t SW_{\mech_{sCA}}(\decls^t) \\
& \geq \frac{1}{4(s+1)T}\sum_t\sum_{i \in G_t} \typei(A_i) \\
& \geq \frac{1}{4(s+1)T}\sum_i\left(\frac{T}{2}-\eps\right)\typei(A_i) \\
& \geq \left(\frac{1}{8(s+1)}-\eps\right)SW_{opt}(\types)
\end{align*}
which implies the required bound.  
\end{proof}

Taking, say, $\eps = 0.1$, and assuming $T >> n$, we conclude that $SW_{\mech_{sCA}}(D) > \frac{1}{O(s)}SW_{opt}(\types)$ with high probability.  Thus $\calm_{sCA}$ implements an $O(s)$ approximation to the $s$-CA problem for best-response bidders.

\subsection{A Mechanism for General CAs}
\label{sec.mech.CA}

Consider the following algorithm for the general CA problem: try greedily assigning sets, of size at most $\sqrt{m}$, by value; return either the resulting solution or the allocation that gives all items to a single agent, whichever has higher welfare.  This algorithm is an $O(\sqrt{m})$ approximation \cite{MN-08}.  We will construct a mechanism $\calm_{CA}$ based on this algorithm; it is described in Figure \ref{fig.mech.CA}.  $\calm_{CA}$ essentially implements two copies of $\calm_{sCA}$: one for sets of size at most $\sqrt{m}$ (which we will call $\calm_{\sqrt{m}CA}$), and one for allocating all objects to a single bidder; it then takes the maximum of the two solutions.  We add one additional modification: with vanishingly small probability $\gamma$, $\calm_{CA}$ ignores bids for $M$ and behaves as $\calm_{\sqrt{m}CA}$.  The purpose of this modification is to encourage agents to bid on small sets, even when the presence of a high-valued bid for a large set would seem to indicate that bidding on small sets is fruitless.  



\begin{figure}
\begin{center}
	\fbox{
		\begin{minipage}{0.9\linewidth}

\textbf{Mechanism} $\calm_{CA}$:

\hrule

\medskip

\textbf{Input:} Declaration profile $\decls = \decl_1, \dotsc, \decl_n$.

%
\begin{tabbing}
1. \= $\decl' \leftarrow $ \texttt{SIMPLIFY}$(\decls)$, say $\decli' = (S_i,v_i)$ \\
2. \> With probability $\gamma$:\\
3. \> \quad \= For all $i$ with $S_i = M$, $\decli' \leftarrow \emptyset$. \\
4. \> Let $(T_1, \dotsc, T_n) \leftarrow \mech_{\sqrt{m}CA}(\decls')$.\\
5. \> If $\exists i : S_i = M$: \\
6. \> \> Let $j \leftarrow \argmax_j \{\decl_j'(M) : S_j = M\}$. \\
7. \> \> If $\decl_j'(S_j) > \sum_{k \neq j: S_k = M}\decl_k'(S_k)$ and\\
\>\> \ \ $\decl_j'(S_j) > \sum_i \decl_i'(T_i)$: \\
8. \> \> \quad \= Set $T_j \leftarrow M$, $T_i \leftarrow \emptyset$ for all $i \neq j$ \\
9. \> Allocate $T_1, \dotsc, T_n$, charge critical prices.
\end{tabbing}

		\end{minipage}
	}
			\caption{Mechanism $\calm_{CA}$, a best-response implementation of a greedy algorithm for the CA problem.  Parameter $\gamma > 0$ is an arbitrarily small positive constant.  Note $\calm_{\sqrt{m}CA}$ is $\calm_{sCA}$ from Figure \ref{fig.mech.sCA} with $s = \sqrt{m}$.}
			\label{fig.mech.CA}
	\end{center}
\end{figure}

The analysis of the average social welfare obtained by $\calm_{CA}$ closely follows the analysis for $\calm_{sCA}$.  Our high-level approach is to apply this analysis twice: once for allocations of sets of size at most $\sqrt{m}$, and once for allocations of all objects to a single bidder.  The primary complicating factor is that the bidding choice of an agent may be influenced by the mechanism's choice of whether or not to allocate $M$ to a single bidder; this can be handled by a careful analysis of utility-maximizing declarations.  We defer all details to the appendix.  The final result is the following.

\begin{thm}
\label{thm.best-response.2}
Choose $\eps > 0$ and suppose $D = d^1, \dotsc, d^T$ is an instance of best-response dynamics with random player order, where agents play undominated strategies, and $T > \eps^{-1}n$.  Then 
\[ SW_{\mech_{CA}}(D) \geq \left(\frac{1}{O(\sqrt{m})}-\eps\right)SW_{opt}(\types) \]
with probability at least $1-2ne^{-T\eps^2/32n}$.
\end{thm}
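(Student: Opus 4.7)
The plan is to adapt the analysis of Theorem \ref{thm.best-response.1} by decomposing $SW_{opt}(\types)$ into contributions from ``small'' and ``large'' OPT sets and handling each via a separate family of good events. Let $A_1, \dotsc, A_n$ be an optimal allocation, partition the agents into $I_S = \{i : |A_i| \leq \sqrt{m}\}$ and $I_L = \{i : |A_i| > \sqrt{m}\}$, and write $W_S = \sum_{i \in I_S}\typei(A_i)$ and $W_L = \sum_{i \in I_L}\typei(A_i)$. Since the large OPT sets are disjoint and span at most $m$ items, $|I_L| < \sqrt{m}$, so by monotonicity the agent $i^* = \argmax_{i \in I_L}\typei(A_i)$ satisfies $\typei[i^*](M) \geq W_L/\sqrt{m}$.

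For each $i \in I_S$, I would define a good event $G_i^t$ analogous to the one in Lemma \ref{lem.goodevents.welfare}: at time $t$, either agent $i$'s bid value is a constant fraction of $\typei(A_i)$, or the sum of bids intersecting $A_i$ in the $\mech_{\sqrt{m}CA}$ sub-problem exceeds a constant fraction of $\typei(A_i)$. The key new twist relative to Theorem \ref{thm.best-response.1} is that a high-valued $M$-bid in $\mech_{CA}$ could otherwise render a small-set bid a losing strategy and invalidate the analog of Proposition \ref{prop.sep.response}. The $\gamma$-event (in which $M$-bids are ignored) patches this: bidding $\typei(A_i)$ on $A_i$ always yields strictly positive expected utility when the intersecting small-set sum is low, so any utility-maximizing response bids on some set $S_i$ with $\typei(S_i)$ at least a constant fraction of $\typei(A_i)$. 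The welfare-contribution argument of Lemma \ref{lem.goodevents.welfare}, applied with $s = \sqrt{m}$, then yields small-set welfare $\Omega(W_S/\sqrt{m})$ on average.

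For the large-OPT part, I would track a good event $H^t$: either $i^*$ bids a single-minded declaration of value a constant fraction of $\typei[i^*](M)$, or some other agent bids comparably for $M$, or the $\mech_{\sqrt{m}CA}$ output already yields welfare $\Omega(W_L/\sqrt{m})$. If all three fail, then $i^*$'s best response is to bid $(M, \typei[i^*](M))$, which beats all competing $M$-bids and the $\mech_{\sqrt{m}CA}$ sum and wins $M$ in the non-$\gamma$-event for expected utility at least $(1-\gamma)\typei[i^*](M)/2$; hence $H^{t+1}$ holds. Applying Lemma \ref{lem.numbad} to each $G_i^t$ and to $H^t$ and taking a union bound over the $O(n)$ events yields the claimed $1 - 2ne^{-T\eps^2/32n}$ failure probability, and on the intersection of good events the welfare averages $\Omega((W_S + W_L)/\sqrt{m}) = \Omega(SW_{opt}(\types)/\sqrt{m})$.

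The main obstacle will be the cross-interaction between the two arms of $\mech_{CA}$: the analogs of Propositions \ref{prop.sep.mech.sCA} and \ref{prop.sep.response} must condition on whether the $M$-allocation arm is currently winning, and the utility analysis must correct for the presence (or $\gamma$-suppression) of $M$-bids. Getting the constants right so that the approximation remains $O(\sqrt{m})$ uniformly in $\gamma$ requires fixing $\gamma$ to a small positive constant (say $1/2$) and absorbing its effect into the asymptotic notation.
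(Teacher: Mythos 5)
Your plan follows essentially the same strategy as the paper: decompose $SW_{opt}$ into the contribution from small ($|A_i|\leq\sqrt{m}$) and large OPT sets, treat the two families of good events separately via Lemma \ref{lem.numbad}, exploit $W_L \leq \sqrt{m}\max_i\typei(M)$, and union bound. The paper's Appendix F does the same, via a cleaner pair of lemmas: Lemma \ref{lem.CAbig} (with condition $SW_{\mech_{CA}}(\emptyset,\declsmi)<\frac{1}{4}\typei(M)$ in place of your three-way disjunction $H^t$) and Lemma \ref{lem.CAsmall}.

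There is one substantive difference you should be aware of, in the small-set half. Your argument is: bidding $\typei(A_i)$ on $A_i$ has expected utility at least $\gamma\cdot\frac{1}{2}\typei(A_i)$ (coming from the $\gamma$-branch), so any best response $\decli$ must satisfy $\typei(S_i)\geq\frac{\gamma}{2}\typei(A_i)$. This bound degrades linearly as $\gamma\to 0$, which is why you are forced to hard-code $\gamma$ as a constant like $1/2$ — a somewhat odd regime given the mechanism is supposed to apply the suppression step with ``vanishingly small'' probability. The paper's Lemma \ref{lem.CAsmall} avoids this: it proves the $\gamma$-\emph{independent} bound $\decli(S_i)\geq\frac{1}{2}\typei(A_i)$, via a case analysis comparing $\decli$ against the single-minded bid $\decli'$ for $A_i$ according to whether each of $\calm_{CA}(\decli,\declsmi)$ and $\calm_{CA}(\decli',\declsmi)$ allocates $M$. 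The crux of that analysis is Claim \ref{claim.CAtech}, which shows that adding or removing a single winning bid for a small set changes $SW_{\mech_{\sqrt{m}CA}}$ by a controlled amount; this lets the paper rule out the case ``$\decli'$ triggers the $M$-allocation while $\decli$ does not,'' which is the only place your expected-utility argument would be lossy. The role of the $\gamma$-randomization in the paper's proof is thus limited to breaking ties / ruling out the case where \emph{both} $\decli$ and $\decli'$ trigger an $M$-allocation, not to providing the quantitative lower bound on $\typei(S_i)$. Your version still yields $O(\sqrt{m})$ for fixed $\gamma$, so it is correct, but it gives a constant that blows up as $\gamma\to 0$, while the paper's does not.

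Two small remarks: your good event $H^t$ for the large-set part includes the alternative ``the $\mech_{\sqrt{m}CA}$ output already yields welfare $\Omega(W_L/\sqrt{m})$,'' which is subsumed by the paper's cleaner single condition $SW_{\mech_{CA}}(\emptyset,\declsmi)\geq\frac{1}{4}\typei[i^*](M)$; and the welfare-accounting step (carrying over Lemma \ref{lem.goodevents.welfare}'s counting argument) should note that $\mech_{CA}$'s realized welfare always dominates the $\mech_{\sqrt{m}CA}$ sub-welfare, so the per-step bound transfers. With these adjustments your plan lines up with the paper's proof.
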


We conclude that mechanism $\calm_{CA}$ implements an $O(\sqrt{m})$ approximation to the combinatorial auction problem for best-response bidders, with high probability, whenever $T >> n$.

\section{Conclusions and Future Work}

We considered the problem of designing mechanisms for use with regret-minimizing and best-response bidders in repeated combinatorial auctions.  
We presented a general black-box construction for the regret-minimization model, which implements any monotone loser-independent approximation algorithm.  For the best-response model, we constructed an $O(\sqrt{m})$-approximate mechanism for the combinatorial auction problem. 

The most obvious direction for future research is to extend our results to implement additional algorithms.  Our best-response mechanisms made specific use of the structure of greedy CA algorithms, but it seems likely that our approach can be generalized.  Another specific question of note is whether Conjecture \ref{conj.best.response} is true, and the mechanism we proposed for regret-minimizing bidders also yields good performance when used by best-response bidders.  Our techniques appear limited to loser-independent algorithms; can algorithms that are not loser-independent be implemented in regret-minimization or best-response domains?  A broader research topic is to explore other models for reasonable bidder behaviour, which may admit different mechanism implementations.


\appendix

\section*{Appendix A: Proof of Proposition \ref{prop.resil}}
\renewcommand{\thesection}{A}
\label{app.resil}


Recall the statement to be proven.  We suppose $\alg$ is a monotone loser-independent $c$-approximate algorithm and $D$ is a declaration sequence for $\mech_\alg$.  If $N \subseteq [n]$ is a collection of agents that minimize regret in $D$, and the remaining agents never bid more than their true values on any set in $D$, then we must show that $\frac{1}{T}\sum_t SW_\alg(\decls^t) \geq \frac{1}{c+1}\sum_{i \in N}\typei(A_i) + |N|(o(1))$ for any allocation profile $A_1, \dotsc, A_n$.

We proceed with the proof.
By Lemma \ref{lem.minregret} and summing over all $i \in N$, 
\begin{align*}
& \sum_{i \in N}\typei(A_i) - |N|(o(1)) \\ 
\leq & \frac{1}{T}\sum_t\sum_{i\in N} (\typei(\alg(\decls^t)) + \criti(A_i,\declsmi^t)) \\
\leq & \frac{1}{T}\sum_t\sum_{i\in [n]} (\typei(\alg(\decls^t)) + \criti(A_i,\declsmi^t)).
\end{align*}  
By Lemma \ref{lem.loserindep}, this implies $\frac{1}{T}\sum_t\sum_i (\typei(\alg(\decls^t)) + c\decl_i^t(\alg(\decls^t))) \geq \sum_{i \in N}\typei(A_i) - |N|(o(1))$.  The result then follows from the fact that $\decl_i^t(\alg(\decls^t)) \leq \type_i^t(\alg(\decls^t))$ for all $i$.

\section*{Appendix B: Assumptions on the Best-Response Model}
\label{app.assume}

Recall that in our model of best-response dynamics, we assumed that in the initial state every bidder makes the empty declaration $\emptyset$, and that if a bidder is chosen for update but cannot improve his utility, he will choose to maintain his previous strategy.  We used these assumptions to argue that agents make only separated declarations when participating in mechanisms $\calm_{sCA}$ and $\calm_{CA}$.  

These assumptions can be removed, as follows.  We can modify mechanisms $\calm_{sCA}$ and $\calm_{CA}$ so that, with vanishingly small probability, an alternative allocation rule is used.  This alternative rule chooses an agent at random, and assigns him all objects at no cost \emph{as long as the input declaration is separated for that agent}.  Thus, any separated declaration by agent $i$ results in positive expected utility.  Then, since any non-separated declaration by an agent results in a utility of $0$ for that agent, it must be that the utility-maximizing declaration by any agent must be a separated declaration.

It follows that after each bidder is chosen at least once for update, and every step thereafter, the input declaration will be separated.  Thus, with high probability, every declaration after $O(n\log n)$ steps will be separated (by the coupon-collector problem).  Lemma \ref{lem.separated} will therefore hold after $O(n\log n)$ steps of best-response dynamics, with high probability; the remainder of the analysis can then proceed without change.

\section*{Appendix C: Best-response Dynamics on $\calm_\alg$ might not Converge}
\label{app.converge}

Consider a combinatorial auction problem where each agent can receive at most 2 items.  Let $\alg$ be the greedy allocation rule that allocates sets greedily by value. Suppose there are 6 agents and 4 objects, say $\{a,b,c,d\}$, and that the agents' true valuations are given by the following set of bids (where the value for a set not listed is taken to be the maximum over its subsets).

\begin{center}
\begin{tabular}{c|c|c}
player & set & value \\
\hline
$1$ & $\{a,b\}$ & $4$ \\
$1$ & $\{d\}$ & $6$ \\
$2$ & $\{a\}$ & $2$ \\
$2$ & $\{b,c\}$ & $5$ \\
$3$ & $\{c\}$ & $4$ \\
$4$ & $\{d\}$ & $5$ \\
\end{tabular}
\end{center}

Since agents $3$ and $4$ are single-minded, they always maximize their utility by declaring values truthfully, so we can assume in any sequence of best-response moves that they do so.   By contrast, players $1$ and $2$ each have a strategic choice to make each round: which of their two desired sets should they bid upon?  Note that once this decision is made, the way to bid is determined by Lemma \ref{lem.undom1} (i.e. bid truthfully for the desired set).  We will now show that from each of the resulting 4 possible declaration profiles, some player has incentive to change their declaration.

Suppose that player $1$ bids for set $\{d\}$ and player $2$ bids for $\{b,c\}$.  (Note that under the assumption that all declarations start empty, we can reach such a state by selecting agents for update in the order 3,4,1,2,1).  Then player $1$ has no incentive to change his declaration, but player $2$ would benefit by changing his bid to $\{a\}$ (increasing his utility from $1$ to $2$).  From that state, player $1$ then has incentive to remove his bid for $\{d\}$ and instead win set $\{a,b\}$, also increasing his utility from $1$ to $2$.  If he does so, player $2$'s utility becomes $0$, so player $2$ benefits by switching back to his bid for $\{b,c\}$.  This results in player $1$ winning nothing, so player $1$ gains by adding back his bid for $\{d\}$.  We have returned to the original state, and hence no reachable state forms an equilibrium.

\section*{Appendix D: Proof of Lemma \ref{lem.numbad}}
\label{app.prob}

Recall the statement to be proven.  We suppose that, for any $\decls$, if $P_1(\declsmi)$ is true, then the best response by agent $i$, $\decli$, satisfies $P_2(\decli)$.  We then wish to show that, for all $\eps > 0$, if best-response dynamics is run for $T > \eps^{-1}n$ steps, there will be at least $(\frac{1}{2}-\eps)T$ steps $t$ for which either $P_1(\declsmi^t)$ is false or $P_2(\decli^t)$ is true, with probability at least $1 - e^{-T\eps^2/32n}$.

\begin{proof}
For each $t$, let $B_i^t$ be the event that neither $P_1(\declsmi^t)$ nor $P_2(\decli^t)$ is true. 
Let $A_i^t$ denote the event that $P_2(\decli^t)$ is true.
Note that $A_i^t$ and $B_i^t$ are mutually exclusive.  Consider the steps in which either $A_i^t$ or $B_i^t$ occurs: let $C_r$ denote the event that $B_i^t$ occurs on the $r$th step in which either $A_i^t$ or $B_i^t$ occurs.
We can think of the index $r$ as representing time steps, where we skip any time step in which neither $A_i^t$ nor $B_i^t$ occurs.  We wish to show $Pr[\sum_{r \leq T} C_r > (\frac{1}{2}+\eps)T] < e^{-T\eps^2/32n}$, which implies our desired result.

Suppose that event $C_r$ occurs; this implies that $B_i^t$ occurs, where $t$ is the $r$th step on which either $A_i^t$ or $B_i^t$ occurs.  With probability $\frac{1}{n}$ agent $i$ is chosen for update on step $t+1$.  Conditioning on that event, $A_i^{t+1}$ occurs (by the assumption of the Lemma), and hence $C_{r+1}$ does not occur.  We conclude $Pr[C_{r+1}|C_r] \leq (1-1/n)$.  

Next suppose that event $C_r$ does not occur; this implies that $A_i^t$ occurs, where $t$ is the $r$th step on which either $A_i^t$ or $B_i^t$ occurs.
With probability $(1-\frac{1}{n})$ agent $i$ is not chosen for update on step $t+1$.  Conditioning on that event, $A_i^{t+1}$ occurs (since $P_2$ depends only on the declaration of agent $i$, which does not change), and hence $C_{r+1}$ does not occur.  We conclude $Pr[C_{r+1}|\neg C_r] \leq 1/n$.

Let $D_1, D_2, \dotsc, D_T$ be a random walk on $\{0,1\}$ defined by $Pr[D_r|D_{r-1}] = (1-1/n)$, $Pr[D_r|\neg D_{r-1}] = 1/n$, and initial condition $D_0$.  Then $\sum_r C_r$ is stochastically dominated by $\sum_r D_r$, and hence 
$Pr[\sum_r C_r > (\frac{1}{2}+\eps)T] \leq Pr[\sum_r D_r > (\frac{1}{2}+\eps)T]$.  It will therefore suffice to show that $Pr[\sum_{r \leq T} D_r > (\frac{1}{2}+\eps)T] < e^{-T\eps^2/32n}$.  
%

The definition of $D_r$ yields 
\begin{align*}
Pr[D_r] & = \frac{1}{n}(1-Pr[D_r]) + (1-\frac{1}{n})Pr[D_r] \\
& = \frac{1}{n} + (1-\frac{2}{n})Pr[D_r].  
\end{align*}
Solving the recurrence (with initial condition $D_0$) yields 
\begin{align*}
Pr[D_r] & = \frac{1}{2}(1-(1-2/n)^r) + D_0(1-2/n)^r \\
& = \frac{1}{2} + (D_0 - \frac{1}{2})(1-2/n)^r.
\end{align*}  
Linearity of expectation then implies 
\[ E[\sum_r D_r] = \frac{1}{2}T + (D_0 - \frac{1}{2})\frac{n}{2}(1-(1-2/n)^{T-1}).\]
From this we conclude that 
\begin{equation}
\label{eq.C.1}
E[\sum_r D_r] < \frac{1}{2}T + \frac{n}{4}
\end{equation} 
and moreover 
\begin{equation}
\begin{split}
\label{eq.C.2}
& \left|E\left[\sum_r D_r | D_0=1\right] - \right. \\
& \left. \quad\quad E\left[\sum_r D_r | D_0=0\right]\right| < \frac{n}{2}.
\end{split}
\end{equation}

Let $k = T/n$ and define random variables $F_1, \dotsc, F_k$ by $F_i = \sum_{r \in [in,(i+1)n-1]}D_r$.  Then $F_i \in [0,n]$ for all $i$, and $\sum_r D_r = \sum_r F_r$.  Furthermore, the influence of $F_i$ on $F_{i+1}, \dotsc, F_{k}$ is captured entirely by the value of $D_{(i+1)n-1}$, and from \eqref{eq.C.2} the influence of $D_{(i+1)n-1}$ on $\sum_{r = (i+1)n}^T D_r = \sum_{r=i+1}^k F_r$ is bounded by $\frac{n}{2}$.  Since the value of $F_i$ also influences the sum $\sum_r F_r$ directly by at most $n$ (due to its being included in the summation), we conclude that for all $\alpha,\alpha' \in [0,n]$,
\begin{align*}
& E\left[\sum_j F_j | F_1, \dotsc, F_{i-1}, F_i = \alpha \right] - \\
& \quad E\left[\sum_j F_j | F_1, \dotsc, F_{i-1}, F_i = \alpha' \right] \leq 3n/2
\end{align*} 
Thus, by the method of bounded average differences, we conclude that 
\begin{align*}
& Pr\left[\sum_j F_j > E\left[\sum_j F_j\right] + (\eps/2)T\right] \\
\leq & e^{-(T\eps/2)^2/2(3n/2)^2k} < e^{-T\eps^2/32n}.
\end{align*}
Since $T > \eps^{-1}n$, we have that $E[\sum_j F_j] + (\eps/2)T \leq \frac{1}{2}T + \frac{n}{4} + (\eps/2)T < (\frac{1}{2} + \frac{\eps}{2})T$.  Thus $Pr\left[\sum_j F_j > (\frac{1}{2} + \frac{\eps}{2})T\right]
< e^{-T\eps^2/32n}$, and the result follows.
\end{proof}

\section*{Appendix E: Omitted proofs from Section \ref{sec.mech.sCA}}
\label{app.sCA}
\renewcommand{\thesection}{E}
\setcounter{thm}{0}

\begin{proof}[Proof (of Lemma \ref{lem.minregret})]
Let $\decl'$ be the single-minded declaration for set $A_i$ at value $\typei(A_i)$.  From the definition of regret minimization,
\begin{align*}
\frac{1}{T}\sum_t & \utili(\decli^t,\declsmi^t) \geq \frac{1}{T}\sum_t \utili(\decl',\declsmi^t) - o(1) \\
& \geq \frac{1}{T}\sum_t \left(\typei(A_i) - \criti^\alg(A_i,\declsmi^t)\right) - o(1) \\
& = \typei(A_i) - \frac{1}{T}\sum_t \criti^\alg(A_i,\declsmi^t) - o(1).
\end{align*}
Since $\utili(\decli^t,\declsmi^t) \leq \typei(\alg(\decls^t))$ for all $t$, the result follows.
\end{proof}

\begin{proof}[Proof (of Lemma \ref{lem.separated})]

We will prove the following claim, which immediately implies the desired result due to our assumption that the initial state of best-response dynamics is the empty declaration profile (which is separated):

\begin{claim}
If $\decls$ is separated, then it remains separated after a step of the best-response dynamics.
\end{claim}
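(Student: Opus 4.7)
The plan is to fix a separated profile $\decls$, let $k$ be the agent selected for update, and let $\decls^*$ denote the profile produced after $k$ best-responds. Since only $k$'s component changes, and under our standing assumptions (undominated strategies, and that $k$ preserves his previous declaration when he cannot strictly improve), only two scenarios arise: either $\decl_k^* = \decl_k$, in which case $\decls^* = \decls$ and separation is trivially preserved, or $k$'s utility strictly increases and hence (being nonnegative before) becomes positive, which forces $\calm_{sCA}$ to allocate $S_k^*$ to $k$ under $\decls^*$.

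For the case $i = k$, I will read off the allocation rule of $\calm_{sCA}$ directly: since $k$ is allocated $S_k^*$, we have $\decl_k^*(S_k^*) > \sum_{j \in R_k(\decls^*, S_k^*)} \decl_j^*(S_j)$. Because $Q_k(\decls^*, S_k^*) \subseteq R_k(\decls^*, S_k^*)$ and bids are nonnegative, the separation condition for $k$ follows immediately.

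For the case $i \neq k$, the first step is to observe that for every agent $j \neq k$, neither membership in $Q_i(\cdot, S_i)$ nor the contributed value $\decl_j(S_j)$ changes between $\decls$ and $\decls^*$, because these depend only on $S_i$, $\typei(S_i)$, $S_j$, and $\decl_j(S_j)$, none of which moved. Thus the only danger is that $k$ itself joins $Q_i(\decls^*, S_i)$ and inflates the sum. I will rule this out as follows: if $k \in Q_i(\decls^*, S_i)$ then $S_k^* \cap S_i \neq \emptyset$ (so $i \in R_k(\decls^*, S_k^*)$) and $\decl_k^*(S_k^*) < \typei(S_i)$; but the mechanism's allocation rule applied to $k$ gives $\decl_k^*(S_k^*) > \sum_{j \in R_k(\decls^*, S_k^*)} \decl_j^*(S_j) \geq \decl_i(S_i)$, while Lemma \ref{lem.undom1} forces $\decl_i(S_i) = \typei(S_i)$. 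Chaining these yields $\typei(S_i) < \decl_k^*(S_k^*) < \typei(S_i)$, a contradiction. Hence $Q_i(\decls^*, S_i) \subseteq Q_i(\decls, S_i) \setminus \{k\}$, and the separation bound for $i$ transfers from $\decls$ to $\decls^*$.

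The main obstacle, though not a deep one, is the bookkeeping in the $i \neq k$ case: one must carefully separate potential changes in $R_i$ (toggling of $k$'s membership) from changes in the sum of bids, and isolate the single nontrivial case where $k$ could in principle enlarge the sum. The crux that makes the argument go through is that undominated bidding pins $\decl_i(S_i)$ exactly at $\typei(S_i)$, leaving no room for a best-responding $k$ to both overbid $i$ on an intersecting set (as required to be allocated) and simultaneously fall strictly below $\typei(S_i)$ (as required to join $Q_i$).
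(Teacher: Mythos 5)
Your proof is correct and takes essentially the same approach as the paper's: split on whether the updating agent actually moves, observe that a strict improvement forces positive utility and hence a winning bid that exceeds the sum (and thus each) of the intersecting bids, and use the undominated-strategy fact $\decl_j(S_j)=\type_j(S_j)$ to conclude the updater cannot enter any other agent's $Q$-set. The paper reaches $Q_j(\decls',S_j)\subseteq Q_j(\decls,S_j)$ directly from $\decl_k^*(S_k^*)>\decl_j(S_j)$ for intersecting $j$, whereas you package the same step as a short contradiction; this is a cosmetic difference only.
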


Suppose agent $i$ is chosen to update his bid, say from $\decli$ to $\decli'$.  Let $\decls' = (\decli',\declsmi)$.  If agent $i$ cannot improve his utility then $\decls' = \decls$, so $\decls'$ is separated as required.  Otherwise, he changes the set upon which he bids from, say, $S_i$ to $S_i'$.  Since $\utili(\decli',\declsmi) > 0$, it must be that $\decli(S_i) > \sum_{j \in R_i(\decls,S_i)} \decl_j(S_j)$, which implies $\decls'$ is separated for agent $i$ and $\decli(S_i) > \decl_j(S_j)$ for all $j$ such that $S_i \cap S_j \neq \emptyset$.  Hence, for all $j \neq i$, we have $Q_j(\decls',S_j) \subseteq Q_j(\decls,S_j)$. Thus, since $\decls$ is separated for all $j \neq i$, $\decls'$ must be separated for each $j \neq i$ as well.
\end{proof}


\begin{proof}[Proof (of Proposition \ref{prop.sep.response})]
Note that $\crit_i^{\mech_{sCA}}(A_i,\declsmi) = \sum_{j \in R_i(\decls,A_i)}\decl_j(S_j)$, so agent $i$ would obtain utility at least $\frac{1}{2}\typei(A_i)$ by making a single-minded declaration for set $A_i$ at value $\typei(A_i)$.  Thus his utility-maximizing declaration must make at least this much utility, and therefore is a bid for some set $S_i$ with $\typei(S_i) \geq \frac{1}{2}\typei(A_i)$.  The result then follows from Lemma \ref{lem.undom1}.
\end{proof}

\begin{proof}[Proof (of Lemma \ref{lem.goodevents.welfare})]
We begin by showing that $SW_{\mech_{sCA}}(\decls) \geq \frac{1}{2}\sum_i \decli(S_i)$.  Recall that if set $S_i$ is allocated by $\mech_{sCA}$, then $\decli(S_i) \geq \sum_{j \in R_i(\decls,S_i)}\decl_j(S_j)$.  Let $N \subseteq [n]$ be the set of agents that receive non-empty sets in $\calm_{sCA}(\decls)$.  Proposition \ref{prop.sep.mech.sCA}
implies that for all $j \in [n]$ there is some $i \in N$ such that $S_i$ intersects $S_j$. Then
\begin{align*}
\sum_i \decli(S_i) & = \sum_{i \in N}\decli(S_i) + \sum_{i \not\in N}\decli(S_i) \\
& \leq SW_{\mech_{sCA}}(\decls) + \sum_{i \in N}\sum_{j \in R_i(\decls,S_i)}\decl_j(S_j) \\
& \leq SW_{\mech_{sCA}}(\decls) + \sum_{i \in N}\decli(S_i) \\
& = 2SW_{\mech_{sCA}}(\decls)
\end{align*}
and therefore $SW_{\mech_{sCA}}(\decls) \geq \frac{1}{2}\sum_i \decli(S_i)$.

Now let $G_1$ denote the set of agents for which $\decli(S_i) \geq \frac{1}{2}\typei(A_i)$, and let $G_2$ denote the set of agents for which $\sum_{j \in R_i(\decls,A_i)}\decl_j(S_j) > \frac{1}{2}\typei(A_i)$, so that $G = G_1 \cup G_2$.

We note that $SW_{\mech_{sCA}}(\decls) \geq \frac{1}{2}\sum_i \decli(S_i) \geq \frac{1}{2}\sum_{i \in G_1} \decli(S_i) \geq \frac{1}{4}\sum_{i \in G_1} \typei(A_i)$.
Also,
$\sum_{i \in G_2}\sum_{j \in R_j(\decls,S_i)}\decl_j(S_j) \geq \frac{1}{2}\sum_{i \in G_2}\typei(A_i)$.  Since $S_j$ can intersect at most $|S_j| \leq s$ sets $A_i$, we conclude $s\sum_j \decl_j(S_j) \geq \frac{1}{2}\sum_{i \in G_2}\typei(A_i)$.  This implies $(4s)SW_{\mech_{sCA}}(\decls) \geq \sum_{i \in G_2} \typei(A_i)$.

We conclude $(4s+4)SW_{\mech_{sCA}}(\decls) \geq \sum_{i \in G_1} \decli(S_i) + \sum_{i \in G_2} \decli(S_i)$, which implies the desired result.
\end{proof}

\section*{Appendix F: Proof of Theorem \ref{thm.best-response.2}}
\label{app.CA}
\renewcommand{\thesection}{F}
\setcounter{thm}{0}

Recall the definition of a separated declaration profile from our analysis of $\calm_{sCA}$.  Lemma \ref{lem.undom1} and Lemma \ref{lem.separated} apply to $\calm_{CA}$ for the same reasons as $\calm_{sCA}$, so we will assume that all declarations in an instance of best-response dynamics are separated, and are single-minded declarations.  Also, we can assume that all single-minded declarations are for sets that are either $M$ or of size at most $\sqrt{m}$, since these are the only sets allocated by $\calm_{CA}$.

\begin{prop}
\label{prop.sep.mech.CA}
If $\decls$ is separated, then for all $j$ such that $S_j \neq \emptyset$, there exists some $S_i$ such that $\decli(S_i) > \decl_j(S_j)$, $\mech_{CA}$ allocates $S_i$ to agent $i$.
\end{prop}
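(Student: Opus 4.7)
The plan is to combine Proposition~\ref{prop.sep.mech.sCA}, applied to the internal sub-mechanism $\mech_{\sqrt{m}CA}$ invoked on Line 4, with a case analysis on whether or not the override block in Lines 5--8 fires. Since $\decls$ is separated, Proposition~\ref{prop.sep.mech.sCA} tells us that the tentative allocation $(T_1,\dots,T_n)$ produced on Line 4 is exactly the greedy value-ordered assignment restricted to sets of size at most $\sqrt{m}$: $T_i = S_i$ precisely when $|S_i|\leq \sqrt{m}$ and $\decl_i(S_i)$ strictly beats every competing $\decl_k(S_k)$ with $S_i \cap S_k \neq \emptyset$. So for any $j$ with $|S_j|\leq \sqrt{m}$, either $T_j = S_j$, or Proposition~\ref{prop.sep.mech.sCA} already supplies an allocated-by-$\mech_{\sqrt{m}CA}$ witness $i'$ with $S_{i'} \cap S_j \neq \emptyset$ and $\decl_{i'}(S_{i'}) \geq \decl_j(S_j)$.

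Fix any $j$ with $S_j \neq \emptyset$. First I would handle the case where the override fires, so $M$ is allocated to $j^* = \argmax_k\{\decl_k(M) : S_k = M\}$. Here $S_{j^*} = M$ trivially intersects $S_j$, so it remains only to check that $\decl_{j^*}(M)$ dominates $\decl_j(S_j)$. If $S_j = M$ this is immediate from the definition of $j^*$. If $|S_j| \leq \sqrt{m}$, then by the observation above either $T_j = S_j$ or there is an allocated $i'$ with $\decl_{i'}(S_{i'}) \geq \decl_j(S_j)$; in both sub-cases the second condition on Line 7 gives $\decl_{j^*}(M) > \sum_i \decl_i(T_i) \geq \decl_j(S_j)$, and $i = j^*$ serves as the desired witness.

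Next I would treat the case in which the override does not fire, so the final output equals the tentative $(T_1,\dots,T_n)$. When $|S_j|\leq \sqrt{m}$, Proposition~\ref{prop.sep.mech.sCA} applied directly to $\mech_{\sqrt{m}CA}$ supplies the witness (with $i=j$ when $T_j=S_j$). The subtle case is $S_j = M$, where the tentative allocation can never award $M$ to $j$ because $|M|>\sqrt{m}$, so the witness must be found among the small-set winners. Failure of the override forces one of the two conditions on Line 7 to be violated: if $\decl_{j^*}(M) \leq \sum_i \decl_i(T_i)$ then the collective declared value of the allocated small sets (each of which intersects $M = S_j$) dominates $\decl_{j^*}(M)\geq \decl_j(M)$; if instead $\decl_{j^*}(M)\leq \sum_{k\neq j^*: S_k=M}\decl_k(S_k)$, then the maximality of $j^*$ among $M$-bidders, combined with the simplification step forcing all such bidders to declare single-mindedly on $M$, supplies the comparison. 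Reconciling this aggregated ``sum'' bound with the single-witness phrasing of the proposition is the step I expect to be the main obstacle, and I would resolve it by examining how the sum decomposes over the small-set winners whose declared sets cover $M$, invoking the greedy-chain argument inside Proposition~\ref{prop.sep.mech.sCA} to propagate from a competing bidder back to an actually allocated one.
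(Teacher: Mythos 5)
Your case decomposition mirrors the paper's exactly: (i) the override fires and $M$ is allocated, (ii) the override does not fire and $|S_j|\le\sqrt m$, and (iii) the override does not fire and $S_j=M$. For (i) and (ii) your reasoning is correct, and in fact more explicit than the paper's, which compresses (i) into the word ``trivial'' and handles (ii) exactly as you do by observing that $\mech_{CA}$ acts as $\mech_{\sqrt{m}CA}$ on $j$'s bid when $M$ is not allocated and invoking Proposition~\ref{prop.sep.mech.sCA}.

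The gap is case (iii), which you correctly flag as ``the main obstacle'' but leave unresolved, and where the outline you do give does not work. Your second branch --- failure of the condition $\decl_{j^*}(M) > \sum_{k\neq j^*:\,S_k=M}\decl_k(S_k)$ --- cannot supply a witness: $j^*$ and every other $M$-bidder are all \emph{losers} when the override does not fire, so ``maximality of $j^*$ among $M$-bidders'' points you at an unallocated agent, not an allocated one. (Moreover, separation applied to $j^*$ essentially rules this branch out up to ties, because every $M$-bidder $k\ne j^*$ with $\decl_k(M)<\decl_{j^*}(M)$ lies in $Q_{j^*}(\decls,M)$, so $\sum_{k\ne j^*:S_k=M}\decl_k(M)\le\decl_{j^*}(M)$.) The missing tool is to apply the separation inequality \emph{to agent $j$ itself}, not chase a greedy chain: if every small-set winner $i$ (i.e.\ every $i$ with $T_i\ne\emptyset$) had $\decl_i(S_i)<\decl_j(M)$, then since $S_i\cap M\ne\emptyset$ each such $i$ lies in $Q_j(\decls,M)$, so separation of $j$ gives $\sum_i\decl_i(T_i)\le\sum_{k\in Q_j(\decls,M)}\decl_k(S_k)\le\decl_j(M)\le\decl_{j^*}(M)$, which contradicts (up to the tie-handling the paper systematically elides) the Line-7 test that must have failed for $M$ not to be allocated. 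So some small-set winner $i$ satisfies $\decl_i(S_i)\ge\decl_j(M)$, and since $S_i\cap M\ne\emptyset$ trivially, that $i$ is the desired witness. The paper calls this ``trivial'' and gives no argument, so you are right that it needs justification; but the justification is this one-line use of separation on $j$, not the greedy-chain propagation you proposed.
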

\begin{proof}
If $\mech_{CA}$ allocates $M$ to some bidder than the result is trivial, so suppose not.  If $S_j = M$, then the result again follows trivially.  If $|S_j| \leq \sqrt{m}$, then, from the perspective of agent $j$, $\mech_{CA}$ behaves precisely as $\mech_{\sqrt{m}CA}$, so the result follows from Proposition \ref{prop.sep.mech.sCA}.
\end{proof}


Fix some separated declaration profile $\decls$.  We now wish to characterize the utility-maximizing declarations of an agent $i$.  We begin with the simple case of declarations for set $M$.

\begin{lem}
\label{lem.CAbig}
If $SW_{\mech_{CA}}(\emptyset,\declsmi) < \frac{1}{4}\typei(M)$, then the utility-maximizing bid for agent $i$, $\decli$, sets $\decli(S_i) \geq \frac{1}{2}\typei(M)$ for some $S_i$.
\end{lem}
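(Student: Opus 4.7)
The plan is to exhibit an explicit candidate bid $\decli = (M,\typei(M))$ for agent $i$ and show that its expected utility against $\declsmi$ is at least roughly $\tfrac{1}{2}\typei(M)$. The utility-maximizing bid must then achieve utility at least this large; since by Lemma~\ref{lem.undom1} any such bid is a single-minded declaration $(S_i, \typei(S_i))$ whose utility is at most $\typei(S_i)$ (payments being non-negative), this forces $\decli(S_i) = \typei(S_i) \geq \tfrac{1}{2}\typei(M)$, which is precisely the conclusion of the lemma.

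Computing the candidate's utility: in the $\gamma$-branch of $\calm_{CA}$ all $M$-bids are suppressed so agent $i$ receives $0$, while in the $(1-\gamma)$-branch agent $i$ wins $M$ exactly when $\typei(M)$ strictly exceeds both $W_{\sqrt{m}}$---the welfare returned by $\calm_{\sqrt{m}CA}$ on $\declsmi$---and $\sum_{k \neq i : S_k = M}\decl_k(M)$, paying critical price $CP_M$ equal to the maximum of these two quantities. To bound $CP_M$, set $SW := SW_{\calm_{CA}}(\emptyset,\declsmi)$ and let $j^*$ be the highest $M$-bidder in $\declsmi$; separatedness of $\decls$ at $j^*$ gives $\sum_{k \neq j^*: S_k = M}\decl_k(M) \leq \decl_{j^*}(M)$, hence $\sum_{k \neq i : S_k = M}\decl_k(M) \leq 2\decl_{j^*}(M)$. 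A brief case analysis on whether $\calm_{CA}(\emptyset,\declsmi)$ allocates $M$ in its non-$\gamma$ branch (using Proposition~\ref{prop.sep.mech.CA} when it does not) yields $\decl_{j^*}(M) \leq SW/(1-\gamma)$, and since $W_{\sqrt{m}} \leq SW$ trivially, we obtain $CP_M \leq 2 SW/(1-\gamma)$. Combining with the hypothesis $SW < \tfrac{1}{4}\typei(M)$ gives $CP_M < \typei(M)$, so agent $i$ does win $M$ in the non-$\gamma$ branch, with expected utility at least $(1-\gamma)(\typei(M) - CP_M) \geq \bigl(\tfrac{1}{2} - O(\gamma)\bigr)\typei(M)$; taking $\gamma$ arbitrarily small yields the desired lower bound.

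The main obstacle is the tie-breaking sub-case where the top $M$-bid $\decl_{j^*}(M)$ exactly equals the sum of the remaining $M$-bids (as separatedness forces whenever the strict step-7 inequality fails), so that $M$ is left unallocated in the non-$\gamma$ branch even though some agent values $M$ highly. In this regime the direct inequality $(1-\gamma)\decl_{j^*}(M) \leq SW$ is no longer available, and one must appeal to Proposition~\ref{prop.sep.mech.CA}---or to the alternative allocation rule introduced in Appendix~B, which gives positive utility to every separated bidder and thereby deters persistent ties on $M$---to exhibit some allocated set whose value is at least $\decl_{j^*}(M)$, and so regain control of $\decl_{j^*}(M)$ in terms of $SW$.
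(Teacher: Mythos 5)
Your proof is correct and follows essentially the same approach as the paper's: both bound $\criti^{\mech_{CA}}(M,\declsmi)$ by roughly $\frac{1}{2}\typei(M)$, using separatedness at the top $M$-bidder to control the sum of competing $M$-bids by twice the top bid, and using the hypothesis $SW_{\mech_{CA}}(\emptyset,\declsmi) < \frac{1}{4}\typei(M)$ (via Proposition~\ref{prop.sep.mech.CA}) to bound both that top bid and the small-set threshold; the conclusion then follows from Lemma~\ref{lem.undom1}. The only difference is that you track the $(1-\gamma)$ factor and spell out the case analysis behind $\decl_{j^*}(M) \leq SW/(1-\gamma)$ explicitly, whereas the paper states these bounds without elaboration; the ``obstacle'' you flag concerning ties at the top $M$-bid is a genuine subtlety that the paper's proof also quietly elides, so it does not distinguish your argument from the intended one.
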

\begin{proof}
We claim that $\crit_i^{\mech_{CA}}(M,\declsmi) \leq \frac{1}{2}\typei(M)$.
To see this, observe that since $SW_{\mech_{CA}}(\emptyset,\declsmi) < \frac{1}{4}\typei(M)$ and $\decls$ is separated, it must be that no agent (other than $i$) bid for $M$ at a value greater than $\frac{1}{4}\typei(M)$, and that the sum of all bids for $M$ (by agents other than $i$) must be at most twice $\frac{1}{4}\typei(M)$.  The threshold amount for allocating $M$ to a single bidder can also be at most $SW_{\mech_{CA}}(\emptyset,\declsmi)$. So $\criti^{\mech_{CA}}(M,\declsmi) \leq 2\left(\frac{1}{4}\typei(M)\right)$ as claimed.  

This implies that agent $i$ can make a utility of $\frac{1}{2}\typei(M)$ by bidding on set $M$. Thus his utility-maximizing declaration must be a bid for a set $S_i$ with $\typei(S_i) \geq \frac{1}{2}\typei(M)$.  The result then follows from Lemma \ref{lem.undom1}.
\end{proof}

Next consider allocations of small sets.  Let $A_1, \dotsc, A_n$ be the optimal allocation of sets of size at most $\sqrt{m}$.  Recall the definition of $R_i$ from our analysis of $\mech_{sCA}$ in the previous section.  

\begin{lem}
\label{lem.CAsmall}
If $\sum_{j \in R_i(\decls,A_i)}\decl_j(S_j) < \frac{1}{2}\typei(A_i)$, then the utility-maximizing bid for agent $i$, $\decli$, sets $\decli(S_i) \geq \frac{1}{2}\typei(A_i)$ for some $S_i$.
\end{lem}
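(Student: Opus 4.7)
The plan is to follow the template of Proposition \ref{prop.sep.response}, but with extra care taken for the two-branch structure of $\calm_{CA}$ (the $\sqrt{m}$-CA allocation versus the all-to-one override). The strategy is to exhibit a specific declaration for agent $i$ that guarantees utility strictly greater than $\tfrac12\typei(A_i)$; since any utility-maximizing bid must match or exceed this, it must also be a single-minded bid on some set with $\typei$-value at least $\tfrac12\typei(A_i)$.

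First I would unpack the hypothesis. Any agent $j\neq i$ who declares $S_j = M$ satisfies $S_j\cap A_i\neq\emptyset$ and so lies in $R_i(\decls,A_i)$. The hypothesis $\sum_{j\in R_i(\decls,A_i)}\decl_j(S_j) < \tfrac12\typei(A_i)$ therefore bounds $\max_{j\neq i,\, S_j=M}\decl_j(S_j) < \tfrac12\typei(A_i)$ as well as the total of all intersecting small-set bids.

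Next I would analyze the candidate declaration in which agent $i$ bids truthfully on $A_i$, i.e.\ the single-minded declaration $(A_i,\typei(A_i))$. In the $\gamma$-branch (Step 3), every $M$-bid is discarded and $\calm_{CA}$ behaves exactly like $\calm_{\sqrt{m}CA}$; by Proposition \ref{prop.sep.mech.sCA} agent $i$ wins $A_i$ at a price at most $\sum_{j\in R_i(\decls,A_i)}\decl_j(S_j) < \tfrac12\typei(A_i)$. In the $(1-\gamma)$-branch the same $\calm_{\sqrt{m}CA}$ computation still allocates $A_i$ to $i$, and the subsequent $M$-override in Steps 6--8 does not fire, because its triggering inequality requires some $M$-bid to exceed $\sum_k \decl'_k(T_k)$, a sum that already contains agent $i$'s declaration of $\typei(A_i)$, which in turn strictly exceeds the largest competing $M$-bid. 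So in both branches agent $i$ wins $A_i$.

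It follows that $\criti^{\calm_{CA}}(A_i,\declsmi)$ is bounded by the larger of $\sum_{j\in R_i(\decls,A_i)}\decl_j(S_j)$ and $\max_{j\neq i,\,S_j=M}\decl_j(S_j)$, both of which are strictly less than $\tfrac12\typei(A_i)$ by the first step. Hence bidding $(A_i,\typei(A_i))$ guarantees agent $i$ expected utility strictly greater than $\tfrac12\typei(A_i)$. Any utility-maximizing single-minded declaration $\decli$ for some set $S_i$ must then deliver at least this much utility; since utility is bounded above by $\typei(S_i)$, we conclude $\typei(S_i)\geq \tfrac12\typei(A_i)$, and by Lemma \ref{lem.undom1} the declared value equals $\typei(S_i)$, giving $\decli(S_i)\geq \tfrac12\typei(A_i)$ as required. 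The main subtlety I expect is precisely this verification that the $M$-override in Step 7 cannot displace agent $i$'s winning small-set bid; this is where the hypothesis does double duty, bounding both the critical price in $\calm_{\sqrt{m}CA}$ and the largest $M$-bid simultaneously.
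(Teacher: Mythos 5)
Your proof is correct, but it takes a genuinely different route from the paper's, and arguably a cleaner one. The paper argues by contradiction: it supposes the utility-maximizing $\decli$ bids on some $S_i$ with $\decli(S_i) < \typei(A_i)/2$, sets $\decli' = (A_i, \typei(A_i))$, and does a case analysis on whether $\calm_{CA}$ allocates $M$ under $\decli$ versus under $\decli'$. That analysis leans heavily on the technical Claim \ref{claim.CAtech}, which sandwiches the change in $\calm_{\sqrt{m}CA}$'s welfare caused by adding agent $i$'s bid, in order to compare the resulting $\sum_k \decl_k'(T_k)$ values across the two bids. You instead argue directly: you bound the critical price of $A_i$ under $\decli'$ in each branch of the randomization, using the key observation that since every $M$-bid intersects $A_i$ and hence lies in $R_i(\decls,A_i)$, the hypothesis forces every $M$-bid below $\tfrac12\typei(A_i)$, whereas $\sum_k \decl_k'(T_k)$ already exceeds $\typei(A_i)$ once $T_i = A_i$, so the Step-7 override cannot fire against $\decli'$. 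That observation sidesteps CAtech entirely and also shows that one of the paper's three cases (the one where $\calm_{CA}(\decli',\declsmi)$ allocates $M$ but $\calm_{CA}(\decli,\declsmi)$ does not) is actually vacuous. The final step — $\utili(\decli,\declsmi) \le \typei(S_i)$ combined with Lemma \ref{lem.undom1} — is the same in spirit as the paper's closing appeal to undominated strategies, but you apply it unconditionally rather than only inside one case. The one minor gap in your presentation is that in the $(1-\gamma)$-branch the invocation of Proposition \ref{prop.sep.mech.sCA} should note that $M$-bidders remain present in the input to $\calm_{\sqrt{m}CA}$ (so they do contribute to $p_i$ in Step 5 of $\calm_{sCA}$), but since those bidders are also in $R_i(\decls,A_i)$ the hypothesis still bounds their contribution, and the conclusion holds; this is exactly the ``double duty'' you flag at the end.
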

\begin{proof}[Proof (of Lemma \ref{lem.CAsmall})]

Recall the definitions of $R_i$ and $Q_i$ from our analysis of $\mech_{sCA}$ in the previous section.  For ease of notation we will write $SW_{\sqrt{m}}$ for $SW_{\mech_{\sqrt{m}CA}}$ in this proof.
We begin with a technical lemma that bounds the effect of a single agent's bid on the social welfare obtained by $\mech_{CA}$.

\begin{claim}
\label{claim.CAtech}
Suppose $\decls$ is separated and $\mech_{\sqrt{m}CA}(\decls)$ allocates $S_i$ to agent $i$.  Then $\decli(S_i) - \sum_{j \in R_i(\decls,S_i)}\decl_j(S_j) \leq SW_{\sqrt{m}}(\decli,\declsmi) - SW_{\sqrt{m}}(\emptyset,\declsmi) \leq \decli(S_i)$.
\end{claim}
\begin{proof}

Let $T_1, \dotsc, T_n$ be the sets allocated in $\mech_{\sqrt{m}CA}(\emptyset,\declsmi)$, and let $U_1, \dotsc, U_n$ be the sets allocated in $\mech_{\sqrt{m}CA}(\decli,\declsmi)$.  Thus $SW_{\mech_{\sqrt{m}CA}}(\decli,\declsmi) = \sum_j \decl_j(U_j)$ and $SW_{\mech_{\sqrt{m}CA}}(\emptyset,\declsmi) = \sum_j \decl_j(T_j)$.  

Since $S_i$ is allocated to agent $i$, it must be that $\decli(S_i) \geq \sum_{j \in R_i(\decls,S_i)}\decl_j(S_j)$.  Let us first consider the case that $\decli(S_i) = \sum_{j \in R_i(\decls,S_i)}\decl_j(S_j)$.  Note that, for any $j$, if $T_j \neq \emptyset$ and $U_j = \emptyset$, then it must be that there exists some agent $k$ such that $j \in Q_k(\decls,S_k)$ and $U_k \neq \emptyset$ (by Proposition \ref{prop.sep.mech.CA}).  Thus, summing over all $j$, it must be that 
\begin{equation}
\begin{split}
\label{eq.1}
\sum_j \decl_j(T_j) & \leq \sum_k \sum_{j \in Q_k(\decls,S_k)} \decl_j(T_j) \\
& \leq \sum_k \decl_k(U_k).
\end{split}
\end{equation}
Since $U_i = S_i$, if we were to increase the value of $\decli(S_i)$, this increases $\sum_k \decl_k(U_k)$ by the same amount.  Thus, if we write $x = \decli(S_i) - \sum_{j \in R_i(\decls,S_i)}\decl_j(S_j)$, \eqref{eq.1} becomes $\sum_k \decl_k(U_k) - x \geq \sum_j \decl_j(T_j)$.  Rearranging gives $\sum_k \decl_k(U_k) - \sum_j \decl_j(T_j) \geq x = \decli(S_i) - \sum_{j \in R_i(\decls,S_i)}\decl_j(S_j)$, which gives one half of our desired inequality.

For the other inequality, suppose for contradiction that $SW_{\sqrt{m}}(\decli,\declsmi) - SW_{\sqrt{m}}(\emptyset,\declsmi) > \decli(S_i)$.  Define $\decls'$ by setting $\decl_j' = \decl_j$ when $S_j \cap S_i = \emptyset$, and $\decl_j' = \emptyset$ when $S_j \cap S_i \neq \emptyset$.  That is, $\decls'$ is $\decls$ with all bids intersecting $S_i$ removed.  
As we showed in \eqref{eq.1}, replacing any winning bid with $\emptyset$ can only decrease the social welfare obtained by $\mech_{\sqrt{m}CA}$.  This implies (by removing bids one at a time) that $SW_{\sqrt{m}}(\emptyset,\declsmi) \geq SW_{\sqrt{m}}(\emptyset,\declsmi')$.

Note that allocation $\mech_{\sqrt{m}CA}(\decli,\declsmi)$ is precisely $\{S_i\} \cup \mech_{\sqrt{m}CA}(\emptyset,\declsmi')$.  We conclude that $SW_{\sqrt{m}}(\decli,\declsmi) - SW_{\sqrt{m}}(\emptyset,\declsmi) = \decli(S_i) - (SW_{\sqrt{m}}(\emptyset,\declsmi) - SW_{\sqrt{m}}(\emptyset,\declsmi')) \leq \decli(S_i)$, as required.  
This completes the proof of Claim \ref{claim.CAtech}.
\end{proof}

We now proceed with the proof of Lemma \ref{lem.CAsmall}.
Suppose for contradiction that $\decli$ is a bid for $S_i$ with $\decli(S_i) < \typei(A_i)/2$.  Then it must be that $S_i \neq M$, since $\typei(M) \geq \typei(A_i)$.  Let $\decli'$ be the single-minded bid for $A_i$ at value $\typei(A_i)$.  

We now consider cases depending on whether not $\calm_{CA}(\decli,\declsmi)$ and/or $\calm_{CA}(\decli',\declsmi)$ allocate $M$ to a single bidder.  If $\calm_{CA}(\decli,\declsmi)$ and $\calm_{CA}(\decli',\declsmi)$ either both allocate $M$ to one agent, or neither do, then $\utili(\decli',\declsmi) > \utili(\decli,\declsmi)$, a contradiction (where the case when they both allocate $M$ relies on the $\eps$-possibility that large bids are ignored).

Suppose $\calm_{CA}(\decli',\declsmi)$ allocates $M$ to one agent, but $\calm_{CA}(\decli,\declsmi)$ does not.  Since neither $A_i$ nor $S_i$ is $M$, it must be that $SW_{\calm_{\sqrt{m}CA}}(\decli',\declsmi) < SW_{\calm_{\sqrt{m}CA}}(\decli,\declsmi)$.  However, by Claim \ref{claim.CAtech}, $SW_{\calm_{\sqrt{m}CA}}(\decli',\declsmi) - SW_{\calm_{\sqrt{m}CA}}(\emptyset,\declsmi) \geq \decli'(A_i)-\sum_{j \in R_i(\decls,A_i)}\decl_j(S_j) \geq \typei(A_i)/2$, and  $SW_{\calm_{\sqrt{m}}}(\decli,\declsmi) - SW_{\calm_{\sqrt{m}}}(\emptyset,\declsmi) \leq \decli(S_i) < \typei(A_i)/2$.  Thus $SW_{\calm_{\sqrt{m}CA}}(\decli',\declsmi) \geq SW_{\calm_{\sqrt{m}CA}}(\decli,\declsmi)$, a contradiction.

Suppose instead that $\calm_{CA}(\decli,\declsmi)$ allocates $M$ to one agent, but $\calm_{CA}(\decli',\declsmi)$ does not.  Then it must be that agent $i$ obtains utility at least $\typei(A_i)/2$ when bidding for $\decli'$, and (since $S_i \neq M$) he obtains utility at most $\eps\typei(S_i) \leq \eps\typei(S_i)/2$ when bidding for $\decli$.  This contradicts the assumed maximality of $\decli$.  We have thus reached a contradiction in all cases.
\end{proof}

We are now ready to complete the proof of Theorem \ref{thm.best-response.2}, in a manner similar to Theorem \ref{thm.best-response.1}.

\begin{proof}[Proof (of Theorem \ref{thm.best-response.2})]

Since each $\decls^t$ is separated, it is clear that $SW_{\mech_{CA}}(\decls^t) \geq \decl_i^t(S_i)$ for all $t$.  Applying Lemma \ref{lem.numbad} to Lemma \ref{lem.CAbig} therefore implies that at for least $(\frac{1}{2}-\eps)T$ steps of $D$, $SW_{\mech_{CA}}(\decls^t) \geq \frac{1}{4}\typei(M)$.  We conclude that, with probability at least $1-ne^{-T\eps^2/32n}$, the average welfare obtained by $\mech_{CA}$ is within a constant factor of $\max_i \typei(M)$.  It must therefore also be an $O(\sqrt{m})$ approximation to the optimal welfare attainable through any allocation of sets of size at least $\sqrt{m}$ (as there can be at most $\sqrt{m}$ such sets in any valid allocation).

Now let $A_1, \dotsc, A_n$ be any allocation of sets of size at most $\sqrt{m}$, and for each $t$ let $G_t$ be the set of agents for which either $\decli^t(S_i^t) \geq \typei(A_i)/2$ or $\sum_{j \in R_i(\decls^t,S_i^t)}\decl_j(S_j^t) \geq \frac{1}{2}\typei(A_i)$.  Applying Lemma \ref{lem.numbad} to Lemma \ref{lem.CAbig}, we see that for each agent $i$ appears in $G_t$ for at least $(\frac{1}{2}-\eps)T$ time steps, with probability at least $1-ne^{-T\eps^2/32n}$.  The same argument as in the proof of Lemma \ref{lem.goodevents.welfare} then demonstrates that $\mech_{CA}$ obtains an average $O(\sqrt{m})$ approximation to $\sum_i \typei(A_i)$, over the steps of the best-response dynamics, with probability at least $1-ne^{-T\eps^2/32n}$.

Taking the union bound over the events described above, we conclude that $\mech_{CA}$ obtains an $O(\sqrt{m})$ approximation to any allocation of sets with size at most $\sqrt{m}$, and an $O(\sqrt{m})$ approximation to any allocation of sets of size at least $\sqrt{m}$, with probability at least $1-2ne^{-T\eps^2/n}$.  Since any allocation is a combination of sets of size at most $\sqrt{m}$ and sets of size at least $\sqrt{m}$, we conclude that $SW_{\mech_{CA}}(D) \geq \left(\frac{1}{O(\sqrt{m})}-\eps\right)SW_{opt}(\types)$ with probability at least $1-2ne^{-T\eps^2/n}$, as required.
\end{proof}





\bibliographystyle{plain}
\bibliography{auctions}

\begin{thebibliography}{10}

\bibitem{BB-04}
M.~Babaioff and L.~Blumrosen.
\newblock Computationally-feasible truthful auctions for convex bundles.
\newblock In {\em Proc. 7th Intl. Workshop on Approximation Algorithms for
  Combinatorial Optimization Problems}, 2004.

\bibitem{BLP-09}
M.~Babaioff, R.~Lavi, and E.~Pavlov.
\newblock Single-value combinatorial auctions and algorithmic implementation in
  undominated strategies.
\newblock {\em Journal of the ACM}, 2009.

\bibitem{BGN-03}
Y.~Bartal, R.~Gonen, and N.~Nisan.
\newblock Incentive compatible multi unit combinatorial auctions.
\newblock In {\em Proc. 9th Conf. on Theoretical Aspects of Rationality and
  Knowledge}, 2003.

\bibitem{BHLR-08}
A.~Blum, M.~Hajiaghayi, K.~Ligett, and A.~Roth.
\newblock Regret minimization and the price of total anarchy.
\newblock In {\em Proc. 39th ACM Symp. on Theory of Computing}, 2008.

\bibitem{BL-09}
A.~Borodin and B.~Lucier.
\newblock Greedy mechanism design for truthful combinatorial auctions.
\newblock Working Paper, 2009.

\bibitem{BL-10}
A.~Borodin and B.~Lucier.
\newblock Price of anarchy for greedy auctions.
\newblock In {\em Proc. 21th ACM Symp. on Discrete Algorithms}, 2010.
\newblock To Appear.

\bibitem{BKV-05}
P.~Briest, P.~Krysta, and B.~V\"ocking.
\newblock Approximation techniques for utilitarian mechanism design.
\newblock In {\em Proc. 36th ACM Symp. on Theory of Computing}, 2005.

\bibitem{CG-09}
C.~Chekuri and I.~Gamzu.
\newblock Truthful mechanisms via greedy iterative packing.
\newblock In {\em Proc. 12th Intl. Workshop on Approximation Algorithms for
  Combinatorial Optimization Problems}, 2009.

\bibitem{CKS-08}
G.~Christodoulou, A.~Kov\'acs, and Michael Schapira.
\newblock Bayesian combinatorial auctions.
\newblock In {\em Proc. 35st Intl. Colloq. on Automata, Languages and
  Programming}, pages 820--832, 2008.

\bibitem{CSS-05}
P.~Cramton, Y.~Shoham, and R.~Steinberg (eds.).
\newblock {\em Combinatorial Auctions}.
\newblock MIT Press, Cambridge, MA, 2005.

\bibitem{DN-07}
S.~Dobzinski and N.~Nisan.
\newblock Mechanisms for multi-unit auctions.
\newblock In {\em Proc. 9th ACM Conf. on Electronic Commerce}, 2007.

\bibitem{DNS-06}
S.~Dobzinski, N.~Nisan, and M.~Schapira.
\newblock Truthful randomized mechanisms for combinatorial auctions.
\newblock In {\em Proc. 37th ACM Symp. on Theory of Computing}, 2006.

\bibitem{eos-05}
B.~Edelman, M.~Ostrovsky, and M.~Schwarz.
\newblock Internet advertising and the generalized second price auction:
  Selling billions of dollars worth of keywords.
\newblock In {\em Stanford Graduate School of Business Research Paper No.
  1917}, 2005.

\bibitem{GMV-05}
M.~Goemans, V.~Mirrokni, and A.~Vetta.
\newblock Sink equilibria and convergence.
\newblock In {\em Proc. 46th IEEE Symp. on Foundations of Computer Science},
  2005.

\bibitem{H-57}
J.~Hannan.
\newblock Approximation to bayes risk in repeated plays.
\newblock In M.~Dresher, A.~Tucker, and P.~Wolfe, editors, {\em Contributions
  to the Theory of Games}, volume~4. Princeton University Press, 1957.

\bibitem{HAS-97}
J.~Hastad.
\newblock Some optimal inapproximability results.
\newblock In {\em Proc. 29th ACM Symp. on Theory of Computing}, 1997.

\bibitem{HKNS-04}
R.~Holzman, N.~Kfir-Dahav, D.~Monderer, and M.~Tennenholtz.
\newblock Bundling equilibrium in combinatiorial auctions.
\newblock {\em Games and Economic Behavior}, 47:104--123, 2004.

\bibitem{JAC-01}
M.~Jackson.
\newblock A crash course in implementation theory.
\newblock In {\em Social Choice and Welfare}, 2001.

\bibitem{KKL-07}
S.~Kakade, A.~Kalai, and K.~Ligett.
\newblock Playing games with approximation algorithms.
\newblock In {\em Proc. 38th ACM Symp. on Theory of Computing}, 2007.

\bibitem{KV-05}
A.~Kalai and S.~Vempala.
\newblock Efficient algorithms for online decision problems.
\newblock {\em Journal of Computer and System Sciences}, 2005.

\bibitem{LS-05}
R.~Lavi and C.~Swamy.
\newblock Truthful and near-optimal mechanism design via linear programming.
\newblock In {\em Proc. 46th IEEE Symp. on Foundations of Computer Science},
  2005.

\bibitem{LOS-99}
D.~Lehmann, L.~I. O'Callaghan, and Y.~Shoham.
\newblock Truth revelation in approximately efficient combinatorial auctions.
\newblock In {\em Proc. 1st ACM Conf. on Electronic Commerce}, pages 96--102.
  ACM Press, 1999.

\bibitem{MN-08}
A.~Mu'alem and N.~Nisan.
\newblock Truthful approximation mechanisms for restricted combinatorial
  auctions.
\newblock {\em Games and Economic Behavior}, 64:612--631, 2008.

\bibitem{AGT}
N.~Nisan, T.~Roughgarden, {\'E}.~Tardos, and V.~Vazirani, editors.
\newblock {\em Algorithmic Game Theory}.
\newblock Cambridge University Press, 2007.

\bibitem{PSS-08}
C.~Papadimitriou, M.~Schapira, and Y.~Singer.
\newblock On the hardness of being truthful.
\newblock In {\em Proc. 49th IEEE Symp. on Foundations of Computer Science},
  2008.

\end{thebibliography}

\end{document}